\documentclass[11pt]{article}
\usepackage[margin=0.8in]{geometry}

\usepackage{varioref}
\usepackage[usenames,svgnames,xcdraw,table]{xcolor}
\definecolor{DarkBlue}{rgb}{0.1,0.1,0.5}
\definecolor{DarkGreen}{rgb}{0.1,0.5,0.1}

\usepackage{hyperref}
\hypersetup{
   colorlinks   = true,
 linkcolor    = DarkBlue, 
 urlcolor     = DarkBlue, 
	 citecolor    = DarkGreen 
}

\usepackage{appendix}

\usepackage{mathtools}

\usepackage{algorithmic}

\usepackage{algorithm}
\usepackage{array}

\usepackage[font={small,it}]{caption}

\usepackage{graphicx,epsfig,amsmath,latexsym,amssymb,verbatim}

\usepackage[square,sort,semicolon,numbers]{natbib}

\newcommand{\extra}[1]{}


\usepackage{amsmath,amssymb,amsfonts}
\usepackage{amsthm}
\usepackage{thmtools,thm-restate}
\usepackage{enumitem}
\usepackage{nicefrac}

\newtheorem{theorem}{Theorem}



\def\squareforqed{\hbox{\rlap{$\sqcap$}$\sqcup$}}
\def\qed{\ifmmode\squareforqed\else{\unskip\nobreak\hfil
\penalty50\hskip1em\null\nobreak\hfil\squareforqed
\parfillskip=0pt\finalhyphendemerits=0\endgraf}\fi}
\def\endenv{\ifmmode\;\else{\unskip\nobreak\hfil
\penalty50\hskip1em\null\nobreak\hfil\;
\parfillskip=0pt\finalhyphendemerits=0\endgraf}\fi}

\renewenvironment{proof}{\noindent \textbf{{Proof~} }}{\qed\medskip}
\newenvironment{proof+}[1]{\noindent \textbf{{Proof #1~} }}{\qed\medskip}

\mathchardef\ordinarycolon\mathcode`\:
\mathcode`\:=\string"8000
\def\vcentcolon{\mathrel{\mathop\ordinarycolon}}
\begingroup \catcode`\:=\active
  \lowercase{\endgroup
  \let :\vcentcolon
  }

\newcommand{\nc}{\newcommand}

\nc{\barA}{\overline{A}}
\nc{\barB}{\overline{B}}
\nc{\barC}{\overline{C}}
\nc{\barD}{\overline{D}}
\nc{\barR}{\overline{R}}
\nc{\barX}{\overline{X}}
\nc{\barY}{\overline{Y}}
\nc{\barU}{\overline{U}}

\usepackage[utf8]{inputenc}
\usepackage{amsmath}
\usepackage{amssymb,amsthm,mathtools}
\usepackage{amsmath,amsfonts,amsthm} 
\usepackage{caption} 
\usepackage{xcolor}
\usepackage{hyperref}
\usepackage[all]{hypcap}

\usepackage[font=small,labelfont=bf]{caption}
\usepackage{xspace}
\usepackage{etoolbox}



\newcommand{\allo}[3]{\left ({#1}^{#2}_{#3}, \ldots #1^{#2}_n \right )}

\newcommand{\I}[2]{\langle #1,#2,v \rangle}
\newcommand{\Mp}[3][]{{\rm M}_p(#2^{#1}_{#3},\ldots , #2^{#1}_n)}

\newcommand{\Pallo}{(\{g_1\},\ldots,\{g_k\},B_{k+1},\ldots , B_{n})}
\newcommand{\summi}[3]{\sum \limits _{i=#1}^{#2} v(#3_i)}

\newtheorem {Proposition}{Proposition}
\newtheorem {Lemma}{Lemma}

\renewcommand{\footnotesize}{\small}

\title{\bfseries Uniform Welfare Guarantees Under \\ Identical Subadditive Valuations}
\author{Siddharth Barman\thanks{Indian Institute of Science. {\tt barman@iisc.ac.in}} \and Ranjani G.~Sundaram\thanks{Chennai Mathematical Institute. {\tt ranjanigs@cmi.ac.in}}}

\date{}

\begin{document}
\maketitle

\begin{abstract}
We study the problem of allocating indivisible goods among agents that have an identical subadditive valuation over the goods. The extent of fairness and efficiency of allocations is measured by the \emph{generalized means} of the values that the allocations generate among the agents. Parameterized by an exponent term $p$, generalized-mean welfares encompass multiple well-studied objectives, such as social welfare, Nash social welfare, and egalitarian welfare. 

We establish that, under identical subadditive valuations and in the demand oracle model, one can efficiently find a single allocation that approximates the optimal generalized-mean welfare---to within a factor of $40$---uniformly for all $p \in (-\infty, 1]$. Hence, by way of a constant-factor approximation algorithm, we obtain novel results for maximizing Nash social welfare and egalitarian welfare for identical subadditive valuations.
\end{abstract}

\section{Introduction}

A significant body of recent work, in algorithmic game theory, has been directed towards the study of fair and efficient allocation of indivisible goods among agents; see, e.g.,~\cite{endriss2017trends} and~\cite{brandt2016handbook}. This thread of research has led to the development of multiple algorithms and platforms (e.g., {Spliddit}~\cite{goldman2015spliddit}) which, in particular, address settings wherein discrete resources (that cannot be fractionally allocated) need to be partitioned among multiple agents. Contributing to this line of work, the current paper studies discrete fair division from a welfarist perspective. 

We specifically address the problem of finding allocations (of indivisible goods) that (approximately) maximize the \emph{generalized means} of the agents' valuations. Formally, for exponent parameter $p \in \mathbb{R}$, the $p${th} generalized mean, of $n$ nonnegative values $\{v_i\}_{i=1}^n$, is defined as ${\rm M}_p(v_1, \ldots, v_n) \coloneqq \left( \frac{1}{n} \sum_i v_i^p \right)^{\frac{1}{p}}$. Parameterized by $p$, this family of functions includes well-studied fairness and efficiency objectives, such as average social welfare ($p=1$), Nash social welfare ($p \to 0$), and egalitarian welfare ($p \to -\infty$). In fact, generalized means---with the exponent parameter $p$ in the range $(-\infty, 1]$---admit a fundamental axiomatic characterization: up to monotonic transformations, generalized means (with $p \in (-\infty, 1]$) exactly constitute \emph{the} family of welfare functions that satisfy the \emph{Pigou-Dalton transfer principle} and a few other key axioms~\cite{moulin2004fair}.\footnote{Note that generalized means are ordinally equivalent to CES (constant elasticity of substitution) functions.} Hence, by way of developing a single approximation algorithm for maximizing generalized means, the current work provides a unified treatment of multiple fairness and efficiency measures. 

With generalized mean as our objective, we focus on fair-division instances in which the agents have a common \emph{subadditive} (i.e., complement free) valuation. Formally, a set function $v$, defined over a set of indivisible goods $[m]$, is a said to be subadditive iff, for all subsets $A$ and $B$ of $[m]$, we have $v(A \cup B) \leq v(A) + v(B)$. This class of functions includes many other well-studied valuation families, namely \emph{XOS}, \emph{submodular}, and \emph{additive} valuations.\footnote{Recall that a submodular function $f$ is defined by a diminishing returns property: $f(A + e) - f(A) \geq f(B + e) - f(B)$, for all subsets $A \subseteq B$ and $e \notin B$.} These function classes have been used extensively in computer science and mathematical economics to represent agents' valuations. Of particular relevance here are results that (in the context of combinatorial auctions) address the problem of maximizing social welfare under submodular, XOS, and, more generally,  subadditive valuations~\cite{nisan2007algorithmic}. 


The focus on a common valuation function across the agents provides a technically interesting and applicable subclass of fair-division problems--as a stylized application, consider a setting in which the agents' values represent money, i.e., for every agent, the value of each subset (of the goods) is equal to the subset's monetary worth. Here, one encounters subadditivity when considering goods that are substitutes of each other. Also, from a technical standpoint, we note that the problem of maximizing social welfare is {\rm APX}-hard even under identical submodular~\cite{khot2008inapproximability} and subadditive valuations~\cite{dobzinski2005approximation}. Appendix \ref{APX_Hardness} extends this hardness result to all $p \in (-\infty, 1]$.\\

\noindent
{\bf Our Results:} Addressing fair-division instances with identical subadditive valuations, we develop an efficient constant-factor approximation algorithm for the generalized-mean objective (Theorem~\ref{MainTheorem}). Specifically, our algorithm computes an allocation (of the indivisible goods among the agents), $\mathcal{A}$, with the property that its generalized-mean welfare, ${\rm M}_p (\mathcal{A})$, is at least ${1}/{40}$ times the optimal $p$-mean welfare, for all $p \in (-\infty, 1]$. This result in fact implies an interesting existential guarantee as well: if in a fair-division instance the agents' valuations are identical and subadditive, then there exists a single allocation that uniformly approximates the optimal $p$-mean welfare for all $p \in (-\infty, 1]$.  

The tradeoff between fairness and economic efficiency is an important consideration in fair division literature.\footnote{For example, consider the work on price of fairness~\cite{bertsimas2011price,bei2019price}} The relevance of the above-mentioned existential guarantee is substantiated by the fact that this result reasonably mitigates the fairness-efficiency tradeoff in the current context; it shows that for identical subadditive valuations there exists a single allocation which is near optimal with respect to efficiency objectives (in particular, social welfare) as well as fairness measures (e.g., egalitarian welfare). Note that such an allocation cannot be simply obtained by selecting an arbitrary partition that (approximately) maximizes social welfare: under identical additive valuations, all the allocations have the same social welfare, even the ones with egalitarian welfare equal to zero. One can also construct instances, with identical subadditive valuations, wherein particular allocations have optimal egalitarian welfare, but subpar social welfare. 

Even specific instantiations of our algorithmic guarantee provide novel results: while the problem of maximizing Nash social welfare, among $n$ agents, admits an $\mathcal{O}(n \log n)$-approximation under nonidentical submodular valuations~\cite{garg2020approximating}, the current work provides a novel (constant-factor) approximation guarantee for maximizing Nash social welfare when the agents share a common subadditive (and, hence, submodular) valuation.\footnote{Under nonidentical additive valuations, there exists a polynomial-time $1.45$-approximation algorithm for maximizing Nash social welfare~\cite{barman2018finding}. Furthermore, under identical additive valuations, maximizing Nash social welfare admits a polynomial-time approximation scheme~\cite{nguyen2014minimizing,barman2018greedy}.}  Analogously, the instantiation of our result for egalitarian welfare is interesting in and of itself.

Given that the valuations considered in this work express combinatorial preferences, a naive representation of such set functions would require exponential (in the number of goods) values, one for each subset of the goods. Hence, to primarily focus on the underlying computational aspects and not on the representation details, much of prior work assumes that the valuations are provided via oracles that can only answer  particular type of queries. The most basic oracle considered in literature answers \emph{value queries}: given a subset of the indivisible goods, the value oracle returns the value of this subset. In this value oracle model, the work of Vondr\'{a}k~\cite{vondrak2008optimal} considers submodular valuations and provides an efficient $\frac{e}{e-1}$-approximation algorithm for maximizing social welfare. Using this method as a subroutine and, hence, completely in the value oracle model, our algorithm achieves the above-mentioned approximation guarantee for identical submodular valuations.

Another well-studied oracle addresses \emph{demand queries}. Specifically, such an oracle, when queried with an assignment of prices $p_1, \ldots, p_m \in \mathbb{R}$ to the $m$ goods, returns $\max_{S \subseteq [m]} \left( v(S) - \sum_{j \in S} p_j \right)$, for the underlying valuation function $v$.\footnote{Observe that a value query can be simulated via polynomially many demand queries. Though, the converse is not true~\cite{nisan2007algorithmic}.} Demand oracles have been often utilized in prior work for addressing social welfare maximization in the context of subadditive and XOS valuations~\cite{nisan2007algorithmic}. In particular, the work of Fiege~\cite{feige2009maximizing} shows that, under subadditive valuations and assuming oracle access to {demand queries},\footnote{This result holds even if the agents have distinct, but subadditive, valuations.} the social welfare maximization problem admits an efficient $2$-approximation algorithm. Demand queries are unavoidable in the subadditive case: one can directly extend the result of Dobzinski et al.~\cite{DobzinskiNS10} to show that, even under identical (subadditive) valuations, any sub-linear (in $n$) approximation of the optimal social welfare requires exponentially many value queries. At the same time, we note that our algorithm requires demand oracle access \emph{only} to implement the $2$-approximation algorithm of Fiege~\cite{feige2009maximizing} as a subroutine. Beyond this, we can work with the value oracle.  \\

\noindent
{\bf Related Work:} Multiple algorithmic and hardness results have been developed to address welfare maximization in the context of indivisible goods/discrete resources. Though, in contrast to the present paper, prior work in this direction has primarily addressed one welfare function at a time. 

As mentioned previously, maximizing social welfare and Nash social welfare (see, e.g., \cite{cole2018approximating} and references therein) has been actively studied in algorithmic game theory. Egalitarian welfare has also been addressed in prior work--this welfare maximization problem is also referred to as the  max-min allocation problem (or the Santa Claus problem); see, e.g.,~\cite{DBLP:journals/corr/AnnamalaiKS14}. Specifically, for maximizing egalitarian welfare under additive and nonidentical valuations, the result of Chakrabarty et al.~\cite{chakrabarty2009allocating} provides an $\widetilde{\mathcal{O}}(n^{\varepsilon})$-approximation algorithm that runs in time $\mathcal{O}(n^\frac{1}{\varepsilon})$; here $n$ denotes the number of agents and $\varepsilon >0$.  Furthermore, under nonidentical submodular valuations, the problem of maximizing egalitarian welfare is known to admit a polynomial-time $\widetilde{\mathcal{O}}(n^{1/4} m^{1/2})$-approximation algorithm~\cite{goemans2009approximating}; here $m$ is the number of goods. In contrast to these sublinear approximations, this paper shows that, if the agents' valuations are identical, then even under subadditive valuations the problem of maximizing egalitarian welfare admits a constant-factor approximation guarantee.



\section{Notation and Preliminaries}

An instance of a fair-division problem corresponds to a tuple $\langle [m], [n], v \rangle$, where $[m]= \left\{1,2,\ldots, m \right\}$ denotes the set of $m \in \mathbb{N}$ indivisible {goods} that have to be allocated (partitioned) among the set of $n \in \mathbb{N}$ agents, $[n]=\{1, 2, \ldots, n\}$. 
Here, $v: 2^{[m]} \mapsto \mathbb{R}_+$ represents the (identical) valuation function of the agents;\footnote{Recall that this work addresses fair-division instances in which all the agents have a common valuation function.} specifically, $v(S) \in \mathbb{R}_+$ is the value that each agent $i \in [n]$ has for a subset of goods $S \subseteq [m]$. 

We will assume throughout that the valuation function $v$ is (i) normalized: $v(\emptyset) = 0$, (ii) monotone:  $v(A) \leq v(B)$ for all $A \subseteq B \subseteq [m]$, and (iii) {subadditive}: $v(A \cup B) \leq v(A) + v(B)$ for all subsets $A, B \subseteq [m]$.

Write $\Pi_n([m])$ to denote the collection of all $n$ partitions of the indivisible goods $[m]$. We use the term \textit{allocation} to refer to an $n$-partition $\mathcal{A} = \allo{A}{}{1} \in \Pi_n([m])$ of the $m$ goods. Here, $A_i$ denotes the subset of goods allocated to agent $i \in [n]$ and will be referred to as a \emph{bundle}. 

Generalized (H\"{o}lder) means, ${\rm M}_p$, constitute a family of functions that capture multiple fairness and efficiency measures. Formally, for an exponent parameter $p \in \mathbb{R}$, the $p${th} generalized mean of $n$ nonnegative numbers $x_1,\ldots , x_n \in \mathbb{R}_+$ is defined as $\Mp{x}{1} \coloneqq \left( \frac{1}{n} \sum \limits _{i=1}^n x_i^p \right )^\frac{1}{p}$.

Note that, when $p=1$, ${\rm M}_p$ reduces to the arithmetic mean. Also, as $p$ tends to zero, ${\rm M}_p$, in the limit, is equal to the geometric mean and $\lim_{p \rightarrow -\infty}  \Mp{x}{1} = \min\{x_1, x_2, \ldots, x_n\}$. Hence, following standard convention, we will write ${\rm M}_0(x_1, \ldots, x_n) = \left(\prod_{i=1}^n x_i \right)^{1/n}$ and ${\rm M}_{-\infty}(x_1, \ldots, x_n) = \min_i x_i $.

Considering generalized means as a parameterized collection of welfare objectives, we define the \emph{$p$-mean welfare}, ${\rm M}_p(\mathcal{A})$, of an allocation $\mathcal{A}=(A_1, A_2, \ldots, A_n)$ as  \begin{align}
{\rm M}_p(\mathcal{A}) & \coloneqq  {\rm M}_p\left( v(A_1), \ldots, v(A_n) \right) = \left( \frac{1}{n} \sum_{i=1}^n v (A_i)^p \right)^{1/p} \label{eq:generalized-mean}
\end{align}
Here, $v$ is the (common) valuation function of the agents. Indeed, with $p$ equal to one, zero, and $-\infty$, the $p$-mean welfare, respectively,  corresponds to (average) social welfare, Nash social welfare, and egalitarian welfare.

Given a fair-division instance $\mathcal{I}=\langle [m], [n], v \rangle$ and $p \in (-\infty, 1]$, ideally, we would like to find an allocation $\mathcal{A} = (A_1, \ldots, A_n)$ with as large an ${\rm M}_p(\mathcal{A})$ value as possible, i.e., maximize the $p$-mean welfare. An allocation that achieves this goal will be referred to as a \emph{$p$-optimal allocation} and denoted by $\mathcal{A}^*(\mathcal{I}, p)=(A^*_1(\mathcal{I}, p), A^*_2(\mathcal{I}, p), \ldots, A^*_n(\mathcal{I}, p))$. 


We note that, under identical, subadditive valuations, finding a $p$-optimal allocation is {\rm APX}-hard, for any $p \in (-\infty, 1]$ (Appendix~\ref{APX_Hardness}). Hence, the current work considers approximation guarantees. In particular, for fair-division instances $\mathcal{I}$ in which the agents have a common subadditive valuation, we develop a polynomial-time algorithm that computes a single allocation $\mathcal{A}$ with the property that ${\rm M}_p(\mathcal{A})\geq \frac{1}{40} {\rm M}_p({\mathcal{A}^{*}}(\mathcal{I}, p))$ for all $p \in (-\infty, 1]$. That is, the developed algorithm achieves an approximation ratio of $40$ uniformly for all $p \in (-\infty, 1]$. 


The work of Fiege~\cite{feige2009maximizing} shows that, for subadditive valuations, the social-welfare maximization problem (equivalently, the problem of maximizing ${\rm M}_1(\cdot)$) admits an efficient $2$-approximation algorithm, assuming oracle access to {demand queries}. In particular, such an oracle, when queried with an assignment of prices $p_1, \ldots, p_m \in \mathbb{R}$ to the $m$ goods, returns $\max_{S \subseteq [m]} \left( v(S) - \sum_{j \in S} p_j \right)$. Our algorithm requires demand oracle access {only} to implement the $2$-approximation algorithm of Fiege~\cite{feige2009maximizing} as a subroutine. Beyond this, we can work with the basic value oracle, which when queried with a subset of goods $S \subseteq [m]$, returns $v(S)$. 

In fact, if the underlying valuation is submodular, then one can invoke the result of Vondr\'{a}k~\cite{vondrak2008optimal} (instead of using the approximation algorithm by Feige~\cite{feige2009maximizing}) and efficiently obtain a $\frac{e}{e-1}$-approximation for the social-welfare maximization problem in the value oracle model. Hence, under a submodular valuation, our algorithm can be implemented entirely in the standard value oracle model. 

For a fair-division instance $\mathcal{I}$, write ${\rm F}(\mathcal{I})$ to denote the $1$-mean welfare ${\rm M}_1$ (i.e., the average social welfare) of the allocation computed by the approximation algorithm of Feige~\cite{feige2009maximizing}. The approximation guarantee established in~\cite{feige2009maximizing} ensures that---for any instance $\mathcal{I}$ with a subadditive valuation---we have ${\rm F}(\mathcal{I}) \geq \frac{1}{2} {\textrm M}_1(\mathcal{A}^{*}(\mathcal{I}, 1))$. Here, $\mathcal{A}^{*}(\mathcal{I}, 1)$ denotes a $1$-optimal allocation, i.e., it maximizes the (average) social welfare in $\mathcal{I}$.

\section{Maximizing $p$-Mean Welfare}

Addressing fair-division instances with identical subadditive valuations, this section presents an efficient algorithm for computing a constant-factor approximation to the $p$-mean welfare objective, uniformly for all $p \in (-\infty, 1]$. 

The algorithm consists of two phases, Algorithm~\ref{Alg} (\textsc{Alg}) and Algorithm~\ref{AlgSub} (\textsc{AlgLow}). In the first phase, ``high-value'' goods are assigned as singletons--we use the approximation algorithm of Feige~\cite{feige2009maximizing} to obtain an estimate of the optimal  $1$-mean welfare and deem a good to be of high value if its valuation is at least a constant (specifically, $\nicefrac{1}{3.53}$) times this estimate. Intuitively, the estimate provides a useful benchmark, since the optimal $1$-mean welfare upper bounds the optimal $p$-mean welfare for all $p \in (-\infty, 1]$ (Proposition~\ref{Monotonicity}); this bound essentially follows from the generalized mean inequality~\cite{bullen1988mathematics} which asserts that, for all $p \in (-\infty, 1]$, the $p$-mean welfare of any allocation $\mathcal{A}$ is at most its $1$-mean welfare, ${\textrm M}_p(\mathcal{A}) \leq {\textrm M}_1(\mathcal{A})$. 

Therefore, in phase one of the algorithm, we sort the goods in non-increasing order by value and iteratively select goods, which by themselves provide a value comparable to that of the optimal $p$-mean welfare. In each iteration, the selected good is assigned as a singleton to an agent and this agent-good pair is removed from consideration. Note that such an update leads to a new fair-division instance with one less good and one less agent, as well as a potentially different optimal $1$-mean welfare. The key technical issue here is that the change in the optimal $1$-mean welfare (and, hence, its estimate obtained via Feige's  algorithm) can be non-monotonic. Nonetheless, via an inductive argument, we show that the welfare contribution of the goods assigned (as singletons) in the first phase is sufficiently large (Lemma~\ref{Induction_argument}). 

The first phase terminates when we obtain an instance $\mathcal{J}$ wherein each good is of value no more than a constant times its optimal $1$-mean welfare. The second phase (\textsc{AlgLow}) is designed to address such a fair-division instance. In particular, we show that, in the absence of high-value goods, we can efficiently find an allocation $\mathcal{B}=(B_i)_i$ such that each bundle $B_i$ is of value at least constant times the optimal $p$-mean welfare of $\mathcal{J}$. To obtain the allocation $\mathcal{B}$, we first compute (via Feige's approximation algorithm) an allocation $\mathcal{S}=(S_j)_j$ that provides a $2$-approximation to the optimal $1$-mean welfare of $\mathcal{J}$. Subsequently, we show that the subsets $S_j$s, that have appropriately high value, can be  partitioned to form the desired bundles $B_i$s, which constitute the allocation $\mathcal{B}$. 

Multiple technical lemmas (in Sections~\ref{Supporting_Lemmas} and~\ref{section:stitching-lemma}) are required to show that  the two phases in combination lead to the desired $p$-welfare bound. It is also relevant to note that, while the above-mentioned ideas hold at a high level, the formal guarantees are obtained by separately analyzing different ranges of the exponent parameter $p$. 




\floatname{algorithm}{Algorithm}
\begin{algorithm}[ht]
  \caption{\textsc{Alg}} \label{Alg}
  \textbf{Input:} A fair-division instance $\mathcal{I}= \langle [m],[n],v \rangle$ with demand oracle access to the subadditive valuation function $v$. \\
  \textbf{Output:} An allocation $\mathcal{A} = (A_1, A_2, \ldots, A_n)$ 
  \begin{algorithmic}[1]
        \STATE Initialize the set of agents ${U}=[n]$, the set of goods $G=[m]$, and bundle $A_i= \emptyset $ for all $i\in {U}$
        \STATE Index all the goods in non-increasing order of value $v(g_1) \geq v(g_2) \geq  \ldots \geq v(g_m)$\label{Ordered_Goods}
        \STATE Set $\mathcal{I}^0= \I{G}{{U}}$ and initialize $t=1$    \COMMENT{Recall that ${\rm F}(\mathcal{I}) = {\rm M}_1 (\mathcal{S})$, where $\mathcal{S}$ denotes the allocation obtained by executing Feige's algorithm~\cite{feige2009maximizing} on instance $\mathcal{I}$.}
        \WHILE {  $v(g_t) \geq \frac{1}{3.53} \ {\rm F}(\mathcal{I}^{t-1})$ } \label{Threshold}
         \STATE Allocate $A_t \leftarrow  \{ g_t \}$ and update $G \leftarrow G\setminus \{g_t\}$ along with ${U} \leftarrow {U} \setminus \{ t \}$
                   \STATE Set $\mathcal{I}^{t}=\I{G}{{U}}$ and update $t \leftarrow t+1$
        \ENDWHILE \label{step:end-while}
        \STATE Set $(A_{t}, A_{t+1},\ldots, A_n) = \textsc{AlgLow}(G, {U} ,v)$ \COMMENT{This step corresponds to the second phase of the algorithm which assigns bundles to the remaining $| {U}| = n-t +1$ agents. Also, note that, in the current instance $\mathcal{J} \coloneqq \langle G, {U}, v \rangle$, for every good $g \in G$ we have $v(g) < \frac{1}{3.53} {\rm F} (\mathcal{J})$.}
        \RETURN allocation $\mathcal{A} = (A_1,A_2,...,A_n).$
     
  \end{algorithmic}
\end{algorithm}


\floatname{algorithm}{Algorithm}
\begin{algorithm}[h]
  \caption{\textsc{AlgLow} } \label{AlgSub}
   \textbf{Input:} A fair-division instance $\mathcal{J} = \langle G, U,v \rangle$ with demand oracle access to the subadditive valuation function $v$. \\\textbf{Output:} An allocation $\mathcal{B} = (B_1, B_2, \ldots, B_{|U|})$ 
  \footnotesize
  \begin{algorithmic} [1]
        \STATE Execute Feige's approximation algorithm~\cite{feige2009maximizing} on the given instance $\mathcal{J}$ to compute allocation $\mathcal{S} = (S_1, S_2, \ldots, S_{|U|})$.  
        \COMMENT{Note that allocation $\mathcal{S}$ provides a $2$-approximation to the optimal $1$-mean welfare of $\mathcal{J}$, ${\rm M}_1 (\mathcal{S}) = {\rm F}(\mathcal{J}) \geq \frac{1}{2} {\rm M}_1 \left(\mathcal{A}^*(\mathcal{J}, 1) \right)$}
        \STATE Index the bundles such that $v(S_1) \geq \ldots \geq v(S_{|U|})$ and initialize $i=a = 1$ along with $B_\ell = \emptyset$ for $1 \leq \ell \leq |U|$ \\ \COMMENT{Lemma~\ref{Low_valued} shows that the following loop runs to completion} 
         \WHILE {agent index $a < |U|$}
         \STATE Consider an arbitrary good $g \in S_i$   
         \IF {$v( B_a \cup \{ g\}) < \frac{1}{3} {\rm F} (\mathcal{J}) $ } 
            \STATE Update $B_a \leftarrow B_a \cup \{ g \}$ and $S_i \leftarrow S_i \setminus \{ g\}$ \COMMENT{Here good $g$ is assigned to bundle $B_a$ to  increase its value}
        \ELSE
        \STATE Update $a \leftarrow a + 1$ \COMMENT{This update is performed when sufficient value has been accumulated in a bundle}
        \ENDIF
        \IF{$v(S_i) < \frac{1}{3} {\rm F} (\mathcal{J}) $}
        \STATE Update $i \leftarrow i + 1$ \COMMENT{Once the value of $S_i$ drops below $\frac{1}{3} {\rm F}(\mathcal{J})$ we consider the next bundle in $\mathcal{S}$}
        \ENDIF
\ENDWHILE
\STATE $B_{|U|} \leftarrow B_{|U|} \cup \left( G \setminus (\bigcup \limits _{a=1}^{|U|-1} B_a )\right)$ \COMMENT{Assign the remaining elements to $B_{|U|}$} 
        
        \RETURN partition $\mathcal{B} = (B_1,\ldots ,B_{|U|}).$
     
  \end{algorithmic}
\end{algorithm}



The following theorem constitutes the main result of the current work. It asserts that Algorithm~\ref{Alg} (\textsc{Alg}) achieves a constant-factor  approximation ratio for the $p$-mean welfare maximization problem. 

\newtheorem{thm}{Theorem}[section]
\newcommand{\thistheoremname}{Main Theorem }

\newtheorem*{genericthm*}{\thistheoremname}
\newenvironment{namedthm*}[1]
  {\renewcommand{\thistheoremname}{#1}%
   \begin{genericthm*}}
  {\end{genericthm*}}

\begin{theorem} [Main Result] \label{MainTheorem}
Let $\mathcal{I} = \langle [m], [n], v \rangle$ be a fair-division instance wherein all the agents have an identical, subadditive valuation function $v$. Given  demand oracle access to $v$, \textsc{Alg} computes in polynomial time an allocation $\mathcal{A}$ that, for all $p \in (-\infty, 1]$, provides a $40$-approximation to the optimal $p$-mean welfare, i.e., ${\rm M}_p(\mathcal{A})\geq \frac{1}{40} \ {\rm M}_p(\mathcal{A}^*(\mathcal{I},p))$, for all $p \in (-\infty, 1]$; here, $\mathcal{A}^*(\mathcal{I},p)$ is the $p$-optimal allocation in $\mathcal{I}$. 
\end{theorem}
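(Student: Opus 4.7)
The plan is to reduce the theorem to a per-bundle guarantee: show that every bundle $A_i$ produced by \textsc{Alg} satisfies $v(A_i) \ge \tfrac{1}{40}\,{\rm M}_p(\mathcal{A}^*(\mathcal{I}, p))$, uniformly in $p \in (-\infty, 1]$. Because the $p$-mean of $n$ nonnegative values, each at least $Y$, is itself at least $Y$ for every $p \in [-\infty, 1]$, such a per-bundle bound immediately yields ${\rm M}_p(\mathcal{A}) \ge \tfrac{1}{40}\,{\rm M}_p(\mathcal{A}^*(\mathcal{I},p))$ without ever needing to manipulate sums of $p$-th powers directly. This framing lets me treat phase-one singletons and phase-two bundles on an equal footing, and reduces the task to producing clean constants on both halves of the algorithm.

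To anchor the analysis I would first replace the target ${\rm M}_p(\mathcal{A}^*(\mathcal{I},p))$ by a constant multiple of ${\rm F}(\mathcal{I}^0)$: by the generalized-mean inequality (Proposition~\ref{Monotonicity}), ${\rm M}_p(\mathcal{A}^*(\mathcal{I},p)) \le {\rm M}_1(\mathcal{A}^*(\mathcal{I},1))$ for all $p \le 1$, and Feige's guarantee gives ${\rm M}_1(\mathcal{A}^*(\mathcal{I},1)) \le 2\,{\rm F}(\mathcal{I}^0)$. For phase one, I would then invoke Lemma~\ref{Induction_argument}: the while-loop assigns $A_t = \{g_t\}$ only when $v(g_t) \ge \tfrac{1}{3.53}\,{\rm F}(\mathcal{I}^{t-1})$, so a careful induction (chaining the local thresholds across the shrinking instances $\mathcal{I}^{t-1}$, whose ${\rm F}$-values may be non-monotone in $t$) lets me conclude that each phase-one singleton has value at least a fixed constant fraction of ${\rm F}(\mathcal{I}^0)$, and hence of ${\rm M}_p(\mathcal{A}^*(\mathcal{I},p))$.

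For phase two I would appeal to Lemma~\ref{Low_valued}, which asserts that the while-loop in \textsc{AlgLow} runs to completion. Each bundle $B_a$ with $a < |U|$ therefore satisfies $v(B_a) \ge \tfrac{1}{3}\,{\rm F}(\mathcal{J})$, and the leftover bundle $B_{|U|}$ only accumulates goods, so monotonicity of $v$ extends the per-bundle bound to it as well. To carry the bound back to the \emph{original} instance, I would use the structural fact that on the residual instance $\mathcal{J}$ fed to \textsc{AlgLow} every remaining good has value strictly below $\tfrac{1}{3.53}\,{\rm F}(\mathcal{J})$, so no single item dominates, and ${\rm F}(\mathcal{J})$ is within a constant factor of ${\rm M}_p(\mathcal{A}^*(\mathcal{I},p))$ on the agents remaining after phase one. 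This reduction is exactly the content of the stitching-lemma machinery of Section~\ref{section:stitching-lemma}, which I would invoke as a black box.

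The main obstacle I anticipate is the stitching step, where the bounds from the two phases have to be reconciled simultaneously for \emph{every} $p \in (-\infty, 1]$. The extremes are the telling cases: at $p = 1$ the arithmetic mean tolerates a handful of small bundles and Feige's $2$-approximation nearly suffices on its own, whereas at $p = -\infty$ the minimum bundle alone governs the welfare, so no phase-one singleton $A_t$ and no phase-two bundle $B_a$ can be too small. The constant $40$ should emerge as the worst-case product of the thresholds $\tfrac{1}{3.53}$ and $\tfrac{1}{3}$, Feige's factor $2$, and the overhead incurred when transferring between the original and residual $p$-optima, with different subranges of $p$ saturating different parts of the analysis.
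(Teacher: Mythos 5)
Your reduction to a per-bundle guarantee --- ``every bundle $A_i$ produced by \textsc{Alg} satisfies $v(A_i) \ge \tfrac{1}{40}\,{\rm M}_p(\mathcal{A}^*(\mathcal{I},p))$ for all $p \in (-\infty,1]$'' --- is unachievable, and this sinks the whole plan. Taking $p=1$ in your target, you would need \emph{every} bundle to be worth at least $\tfrac{1}{40}$ of the optimal \emph{average social welfare}. Consider $n$ agents and a single good of value $1$ (so $m=1$): the optimal $1$-mean welfare is $1/n>0$, but any allocation leaves $n-1$ agents with empty bundles of value $0$. The algorithm is still correct here (for $p\le 0$ the optimum is $0$, and for $p=1$ the singleton alone carries the whole sum), precisely because the $1$-mean tolerates many worthless bundles as long as the \emph{sum} of $p$-th powers is large. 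This is why the paper cannot avoid ``manipulating sums of $p$-th powers directly'': Lemma~\ref{Induction_argument} is not a per-bundle statement but an inequality between $\sum_i v(g_i)^p + \sum_j v(A_j^*(\mathcal{J},p))^p$ and $\sum_i v(A_i^*(\mathcal{I},p))^p$ (with the inequality reversing sign across $p<0$, $p=0$, $p>0$), and Lemma~\ref{Low_valued}'s per-bundle bound is only relative to the optimum of the \emph{residual} instance $\mathcal{J}$, not of $\mathcal{I}$.

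A second, independent gap is your phase-one claim that each singleton $g_t$ is worth a constant fraction of ${\rm F}(\mathcal{I}^0)$. This is false. Take an additive $v$ with one good of value $V$ and $n-1$ goods of value $\varepsilon \ll V/n$: then ${\rm F}(\mathcal{I}^0) \approx V/n$, the big good is assigned first, and in $\mathcal{I}^1$ the benchmark collapses to ${\rm F}(\mathcal{I}^1)\approx\varepsilon$, so $g_2$ (of value $\varepsilon$) still passes the threshold $v(g_2)\ge\tfrac{1}{3.53}{\rm F}(\mathcal{I}^1)$ yet is arbitrarily smaller than ${\rm F}(\mathcal{I}^0)$. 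The while-loop thresholds are local to the shrinking instances and do not chain back to $\mathcal{I}^0$; the paper's induction instead compares $v(g_t)$ to the bundle $A_t^*(\mathcal{I}^{t-1},p)$ containing $g_t$ in the $p$-optimum of $\mathcal{I}^{t-1}$, using the structural Lemma~\ref{Good_Transfer} when that bundle is large. Finally, note that Lemma~\ref{Good_Transfer} and the combination lemma only cover $p\in(-\infty,0.4)$; the range $p\in[0.4,1]$ requires a genuinely different argument (splitting the optimal bundles by subadditivity and using $(x+y)^p\le x^p+y^p$), which your uniform-in-$p$ scheme does not supply.
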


 
 We first consider an instance $\mathcal{J}$ wherein all the goods are of value a constant times less than ${\rm F}(\mathcal{J})$ and prove that, for such an instance, \textsc{AlgLow} finds an allocation in which the value of every bundle is comparable to the optimal average social welfare of $\mathcal{J}$. In Section \ref{subsection:p-inf-half}, we use this fact and supporting lemmas from Sections~\ref{Supporting_Lemmas} and~\ref{section:stitching-lemma} to prove Theorem~\ref{MainTheorem} for $p\in (-\infty,0.4)$. Finally, in Section \ref{subsection:p-half-one}, we prove the main result for $p\in [0.4,1].$ 
 
We start with the following observation to upper bound the optimal $p$-mean welfare in terms of the optimal $1$-mean welfare. 
\begin{Proposition}\label{Monotonicity}
Let $\mathcal{I}$ be a fair-division instance in which all the agents have an identical, subadditive valuation $v$. Then, for each $p\in(-\infty,1],$ the optimal $1$-mean welfare is at least as large as the optimal $p$-mean welfare:
\begin{align*}
{\rm M}_1(\mathcal{A}^{*}(\mathcal{I},1)) \geq {\rm M}_p(\mathcal{A}^{*}(\mathcal{I},p)) \  \ \text{for every } \; p \in \left(- \infty , 1\right]
\end{align*}  
\end{Proposition}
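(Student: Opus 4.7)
The plan is to chain two inequalities: first relate the optimal $p$-mean value to the $1$-mean value of the same allocation via the generalized mean inequality, and then use the optimality of $\mathcal{A}^*(\mathcal{I},1)$ with respect to the $1$-mean welfare.

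Concretely, I would fix $p \in (-\infty, 1]$ and consider the $p$-optimal allocation $\mathcal{A}^*(\mathcal{I},p) = (A_1^*, \ldots, A_n^*)$. The generalized (H\"older) mean inequality, as quoted in the preceding paragraph of the excerpt from~\cite{bullen1988mathematics}, states that for any nonnegative values $v(A_1^*), \ldots, v(A_n^*)$ and exponent $p \leq 1$,
\[
{\rm M}_p\bigl(v(A_1^*), \ldots, v(A_n^*)\bigr) \;\leq\; {\rm M}_1\bigl(v(A_1^*), \ldots, v(A_n^*)\bigr).
\]
The left-hand side equals ${\rm M}_p(\mathcal{A}^*(\mathcal{I},p))$ by definition, while the right-hand side equals ${\rm M}_1(\mathcal{A}^*(\mathcal{I},p))$. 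The limiting cases $p \to 0$ and $p \to -\infty$ (i.e., the geometric and minimum means) are covered by the same inequality under the standard convention adopted in the Preliminaries.

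Next, since $\mathcal{A}^*(\mathcal{I},1)$ is defined to be an allocation that maximizes the $1$-mean welfare over $\Pi_n([m])$, and $\mathcal{A}^*(\mathcal{I},p)$ is itself an allocation in $\Pi_n([m])$, we immediately get
\[
{\rm M}_1(\mathcal{A}^*(\mathcal{I},p)) \;\leq\; {\rm M}_1(\mathcal{A}^*(\mathcal{I},1)).
\]
Combining the two displayed inequalities yields ${\rm M}_p(\mathcal{A}^*(\mathcal{I},p)) \leq {\rm M}_1(\mathcal{A}^*(\mathcal{I},1))$, as required.

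There is really no obstacle here: the proposition is essentially a restatement of the generalized mean inequality applied to a specific (optimal) allocation, together with a one-line optimality comparison. Subadditivity of $v$ is in fact not used in the argument; the claim holds for any nonnegative valuation, which is consistent with the general nature of the $p$-mean ordering. I would present the proof as a two-line display and note this in passing.
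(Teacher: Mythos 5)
Your proof is correct and follows exactly the same two-step argument as the paper: apply the generalized mean inequality to the allocation $\mathcal{A}^*(\mathcal{I},p)$ to get ${\rm M}_p(\mathcal{A}^*(\mathcal{I},p)) \leq {\rm M}_1(\mathcal{A}^*(\mathcal{I},p))$, then invoke the $1$-mean optimality of $\mathcal{A}^*(\mathcal{I},1)$. Your side remark that subadditivity is not needed is accurate.
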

\begin{proof}
The generalized mean inequality (see, e.g., ~\cite{bullen1988mathematics}) applied to the allocation $\mathcal{A}^{*}(\mathcal{I},p)$, gives us ${\rm M}_p(\mathcal{A}^{*}(\mathcal{I},p)) \leq {\rm M}_1 (\mathcal{A}^{*}(\mathcal{I},p))$, for all $p \in (-\infty, 1]$. By definition, the allocation $\mathcal{A}^*(\mathcal{I}, 1)$ maximizes the $1$-mean welfare, ${\rm M}_1(\cdot)$, and, hence, the claim follows ${\rm M}_1 (\mathcal{A}^{*}(\mathcal{I},1)) \geq {\rm M}_1 (\mathcal{A}^{*}(\mathcal{I},p))  \geq {\rm M}_p(\mathcal{A}^{*}(\mathcal{I},p))$. 
\end{proof}

\section{Approximation Guarantee for \textsc{AlgLow}}
\label{Approx_for_AlgSub}
This section addresses the second phase of the algorithm (\textsc{AlgLow}) that---by the processing performed in the while-loop of \textsc{Alg}---solely needs to  consider fair-division instances $ \mathcal{J}=\I{G}{U}$ wherein all the goods $g \in G$ satisfy $v(g)\leq \frac{1}{3.53}{\rm F}(\mathcal{J})$, i.e., the goods are of ``low value.''  The following lemma establishes that, for such instances, \textsc{AlgLow} finds bundles each with value comparable to the optimal $1$-mean welfare (and, hence, comparable to the optimal $p$-mean welfare) of $\mathcal{J}$. 

Recall that here $G$ is a subset of the original set of goods $[m]$, $U$ is a subset of the $[n]$ agents, and ${\rm F}(\mathcal{J})$ denotes the $1$-mean welfare (average social welfare) of the allocation computed by Feige's approximation algorithm for instance $\mathcal{J}$.

\begin{Lemma}\label{Low_valued}
Let $\mathcal{J} = \I{G}{U}$ be a fair-division instance in which all the agents have an identical subadditive valuation function $v$, and every good $g\in G$ satisfies   
$v(g)\leq \frac{1}{3.53}{\rm F}(\mathcal{J})$. Then, in the demand oracle model, the algorithm \textsc{AlgLow} efficiently computes an allocation  $\mathcal{B}=(B_1,\ldots , B_{|U|})$ with the property that, for all $i \in \{1, \ldots, |U|\}$,
\begin{align*}
v(B_i)\geq \frac{1}{40} {\rm M}_1(\mathcal{A}^{*}(\mathcal{J},1)) \geq \frac{1}{40} {\rm M}_p(\mathcal{A}^{*}(\mathcal{J},p)).
\end{align*}
\end{Lemma}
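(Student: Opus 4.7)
The plan is to establish three facts, writing $F := {\rm F}(\mathcal{J})$: \emph{(a)} whenever the algorithm increments $a$ via the else branch, the just-finalized bundle satisfies $v(B_a) \geq F/20$; \emph{(b)} the while loop runs to completion, i.e., $a$ reaches $|U|$ before $i$ exceeds $|U|$; and \emph{(c)} the residual bundle $B_{|U|}$ also satisfies $v(B_{|U|}) \geq F/20$. These, combined with Feige's $2$-approximation $F \geq \tfrac{1}{2} {\rm M}_1(\mathcal{A}^*(\mathcal{J},1))$ and Proposition~\ref{Monotonicity}, will yield $v(B_i) \geq F/20 \geq \tfrac{1}{40} {\rm M}_p(\mathcal{A}^*(\mathcal{J}, p))$ for every $p \in (-\infty, 1]$.

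Fact \emph{(a)} is direct. When the else branch fires, $v(B_a \cup \{g\}) \geq F/3$ for some $g$ with $v(\{g\}) \leq F/3.53$, so subadditivity gives $v(B_a) \geq F/3 - F/3.53 \geq F/20$; the constants are calibrated so this difference is at least $1/20$, and $B_a$ is frozen thereafter. Fact \emph{(c)} rests on the maximum-value Feige bundle $S_1$, for which $v(S_1) \geq F$ by the sorted average. If $S_1$ is still being processed at exit (so $i = 1$), then the exiting iteration must be an else branch (a step-$2$ transfer would have driven $v(S_1)$ below $F/3$ and advanced $i$), hence $v(S_1^{\mathrm{current}}) \geq F/3$; since these leftover goods are never transferred, they all flow into $B_{|U|}$ and $v(B_{|U|}) \geq F/3$. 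Otherwise $S_1$ was moved past; the triggering step-$4$ check must have followed a step-$2$ transfer of some $g \in S_1$ (the else branch alone does not alter $v(S_1)$), and applying the same subadditivity argument as in \emph{(a)} yields $v(S_1^{\mathrm{rem}}) \geq F/3 - v(\{g\}) \geq F/20$; since $S_1^{\mathrm{rem}}$ is then untouched, $S_1^{\mathrm{rem}} \subseteq B_{|U|}$ gives $v(B_{|U|}) \geq F/20$.

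Fact \emph{(b)}---the loop termination---is the main obstacle. I plan an argument by contradiction: suppose $i$ advances past $|U|$ while $a \leq |U| - 1$. Call an $S_j$ with $v(S_j) \geq F/3$ a ``Case~$2$'' bundle, and let $h$ denote their count. For each Case~$2$ $S_j$ that is moved past, I will show that the number $k_j$ of $a$-increments triggered during its processing satisfies $k_j > 3 v(S_j)/F - 2$: combine $v(T_j) \geq v(S_j) - F/3$ (since $v(S_j^{\mathrm{rem}}) < F/3$ once moved past) with $v(T_j) \leq L_j \cdot F/3$ (by subadditivity, each of the $L_j$ bundles receiving a good from $S_j$ contributes at most $F/3$), and use $L_j \leq k_j + 1$. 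Summing over Case~$2$ bundles, whose total value exceeds $(2|U| + h) F/3$ (by bounding the Case~$1$ bundles' sum by $(|U|-h) F/3$), yields more than $2|U| - h$ total $a$-increments. This contradicts the crash-time upper bound of at most $|U| - 2$ unless $h > |U| + 2$, which is impossible since $h \leq |U|$. The delicate point is the bookkeeping $L_j \leq k_j + 1$, which depends on whether $S_j$'s first iteration is a transfer (giving $L_j = k_j + 1$) or an immediate $a$-increment ($L_j = k_j$); the constants $1/3$ and $1/3.53$ are chosen so that these estimates tighten into a clean contradiction.
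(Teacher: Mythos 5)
Your proof is correct and follows essentially the same route as the paper: the paper establishes the identical counting claim in abstract form (each $S_i$ with $v(S_i)\ge \frac{1}{3}{\rm F}(\mathcal{J})$ splits greedily into at least $\frac{3v(S_i)}{{\rm F}(\mathcal{J})}-1$ pieces of value at least $\frac{1}{3}{\rm F}(\mathcal{J})-\frac{1}{3.53}{\rm F}(\mathcal{J})\ge \frac{1}{20}{\rm F}(\mathcal{J})$, and summing over the set $H$ of such bundles, whose total value is at least $\tfrac{2}{3}|U|\,{\rm F}(\mathcal{J})$, yields at least $2|U|-|H|\ge |U|$ pieces) and then asserts that \textsc{AlgLow} realizes this partition. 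Your version runs the same count directly on the algorithm's execution, with the added merit of explicitly verifying loop termination and the value of the residual bundle $B_{|U|}$, details the paper leaves implicit.
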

\begin{proof}
For input instance $\mathcal{J}$, Feige's algorithm returns an allocation $\mathcal{S}=(S_1,S_2,\ldots,S_{|U|})$ with near-optimal average social welfare:
\begin{align*}
\frac{1}{|U|}\summi{1}{|U|}{S} & = {\rm F}(\mathcal{J}) \geq \frac{1}{2}{\rm M}_1(\mathcal{A}^{*}(\mathcal{J},1))
\end{align*}

Given allocation $\mathcal{S}=(S_1,S_2,\ldots,S_{|U|})$, we show that one can partition $S_i$s to form $|U|$ bundles such that the value of each  bundle is at least $\frac{1}{40} {\rm M}_1(\mathcal{A}^{*}(\mathcal{J},1))$. Note that, by the assumption in the lemma, for each $g \in G$ we have 
\begin{align}\label{2}
v(g)\leq \frac{1}{3.53}{\rm F}(\mathcal{J})
\end{align} 

Let $H \coloneqq \left\{ i \in \{1, 2, \ldots, |U|\} \mid v(S_i) \geq \frac{1}{3} {\rm F}(\mathcal{J}) \right\}$ denote the subsets in $\mathcal{S}$ with value at least $\frac{1}{3} {\rm F}(\mathcal{J})$. These are the subsets we split to form bundles $B_i's$ that form the output allocation $\mathcal{B}$.

Since $v$ is subadditive, we have the following lower bound on the cumulative value of the subsets in $H$
\begin{align}\label{3}
\sum \limits _{i \in H} v(S_i) \geq \sum \limits _{j=1}^{|U|} v(S_j) - \frac{1}{3} {\rm F}(\mathcal{J}) \cdot |U| = \frac{2}{3}\ |U|  \cdot  {\rm F}(\mathcal{J})
\end{align}
\newtheorem{Claim}{Claim}
\begin{Claim}
Let $\mathcal{S} = (S_1,\ldots,S_{|U|})$ be the allocation computed by Feige's algorithm for input instance $\mathcal{J}.$ Then, every $S_i$, with the property that $v(S_i) \geq \frac{1}{3} {\rm F}(\mathcal{J})$, can be partitioned into at least $k=\left ( \frac{3v(S_i)}{{\rm F}(\mathcal{J})}-1\right )$ subsets $T^1_i,\ldots , T^k_i$ such that $v(T^j_i)\geq \frac{1}{20}{\rm F}(\mathcal{J})$ for each $T^j_i$.
\end{Claim}
\begin{proof}
Initialize $T_i^1$ to be the empty set. Then, we keep transferring goods---in any order and one at a time---from $S_i$ to $T_i^1$ till the value of $T_i^1$ goes over $\frac{1}{3}{\rm F}(\mathcal{J}).$ Returning the last such good back into $S_i$, the populated set $T_i^1$ satisfies 
\begin{align*}
v(T_i^1) & \geq \frac{1}{3}{\rm F}(\mathcal{J}) - \frac{1}{3.53}{\rm F}(\mathcal{J}) \tag{using (\ref{2}) and the subadditivity of $v$} \nonumber \\ & \geq \frac{1}{20}{\rm F}(\mathcal{J}) 
\end{align*}
Note that, by construction, $v(T_i^1) \leq \frac{1}{3}{\rm F}(\mathcal{J})$. Therefore, using the subadditivity of $v$, we get $v(S_i\setminus T_i^1) \geq v(S_i)-\frac{1}{3}{\rm F}(\mathcal{J})$.

We can repeat the above process to obtain subsets $T_i^1,\ldots T_i^k$ and stop when $v\left(S_i\setminus  ( T_i^1 \cup \ldots \cup T_i^k) \right) \leq \frac{1}{3}{\rm F}(\mathcal{J}).$ Given that, for each $T_i^j$, we remove a subset of value atmost $\frac{1}{3} {\rm F}(\mathcal{I})$ from $S_i$, the subadditivity of $v$ gives us $v \left(S_i\setminus ( T_i^1 \cup \ldots \cup T_i^k) \right) \geq v(S_i) - \frac{k{\rm F}(\mathcal{J})}{3}$. Hence, the following lower bound holds $k\geq \frac{3v(S_i)}{{\rm F}(\mathcal{J})} -1$.
In other words, we can extract at least $\frac{3v(S_i)}{{\rm F}(\mathcal{J})} -1$ bundles, each of value no less than $\frac{1}{20}{\rm F}(\mathcal{J})$, from $S_i$.
\end{proof}
We now apply the same procedure to every subset in $H \coloneqq \left\{ i \in \{1, 2, \ldots, |U| \} \mid v(S_i) \geq \frac{1}{3} {\rm F}(\mathcal{J}) \right\}$ to obtain $k^{'} = \sum \limits _{i=1}^{|H|} \left ( \frac{3v(S_i)}{{\rm F}(\mathcal{J})} -1 \right ) $ bundles, each of value at least $\frac{1}{20}{\rm F}(\mathcal{J})$. Using this equation and inequality (\ref{3}), we get
\begin{align*}
k^{'} & \geq \frac{2|U|{\rm F}(\mathcal{J})}{{\rm F}(\mathcal{J})} - |H| \geq 2|U| - |U|  \tag{Since $|H|<|U|$} \\
& = |U| \hspace{3.3cm}
\end{align*}

In conclusion, one can construct at least $|U|$ bundles of value at least $\frac{1}{20}{\rm F}(\mathcal{J}) \geq \frac{1}{40}{\rm M}_1(\mathcal{A}^*(\mathcal{J},1))$. Note that this observation implies that \textsc{AlgLow} successfully finds $|U|$ bundles each of value at least $\frac{1}{40} {\rm M}_1(\mathcal{A}^*(\mathcal{J},1))$. 

Proposition~\ref{Monotonicity} gives us ${\rm M}_1(\mathcal{A}^*(\mathcal{J},1)) \geq {\rm M}_p(\mathcal{A}^{*}(\mathcal{J},p))$ and, hence, the stated claim follows. 
\end{proof}

With an approximation guarantee for \textsc{AlgLow} in hand, we next analyze the first phase of the algorithm--specifically, analyze the while-loop in \textsc{Alg}. Note that in each iteration of this while-loop (Steps (\ref{Threshold}) to (\ref{step:end-while})) a good of value at least constant times the optimal $1$-mean welfare (of the current instance) is allocated as a singleton to an agent. We will show that these singleton assignments complement the subsequent use of  \textsc{AlgLow}, in the sense that the two phases together retain welfare guarantees. To formally prove Theorem \ref{MainTheorem}, we first state and prove two useful results, Lemma~\ref{Good_Transfer} (in Section~\ref{Supporting_Lemmas}) and Lemma~\ref{Induction_argument} (in Section~\ref{section:stitching-lemma}).

\section{Structural Lemma} \label{Supporting_Lemmas}
The following lemma provides a structural property of $p$-optimal allocations $\mathcal{A}^*(\mathcal{I}, p)$, for $p \in (-\infty,0.4)$. It states that the only way allocation $\mathcal{A}^*(\mathcal{I}, p)$ has a bundle $A^*_i(\mathcal{I}, p)$ of notably high value is through a single good $g \in A^*_i(\mathcal{I}, p)$ that by itself has high value. 


\begin{Lemma}\label{Good_Transfer}
Let $\mathcal{L}= \I{[M]}{[N]}$ be a fair-division instance wherein all the $N \in \mathbb{N}$ agents have an identical, subadditive valuation $v$ over the set of $M\in\mathbb{N}$ goods. In addition, let $\mathcal{A}^{*}(\mathcal{L},p) = \{ A^{*}_{1}(\mathcal{L},p), \cdots , A^{*}_{N}(\mathcal{L},p)\}$ be a $p$-mean optimal allocation in $\mathcal{L}$,  for any  $p \in \left( -\infty ,0.4 \right)$. 

If for any bundle $A^{*}_i(\mathcal{L},p)$, with $i \in [N]$, we have $v(A^{*}_i(\mathcal{L},p)) > 11.33\ {\rm F}(\mathcal{L})$, then there exists a good $g\in A_i^*(\mathcal{L},p)$ with the property that that $v(g)\geq \frac{1}{40} v(A_i^*(\mathcal{L},p)).$
\end{Lemma}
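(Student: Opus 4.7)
The plan is to prove the contrapositive by a strict-exchange argument. Assume that $V \vcentcolon= v(A^{*}_i(\mathcal{L}, p)) > 11.33\,{\rm F}(\mathcal{L})$ and that every good $g \in A^{*}_i(\mathcal{L}, p)$ satisfies $v(g) < V/40$; the goal is to produce an allocation $\mathcal{A}'$ with ${\rm M}_p(\mathcal{A}') > {\rm M}_p(\mathcal{A}^{*}(\mathcal{L}, p))$, contradicting $p$-optimality.

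I first split $A^{*}_i(\mathcal{L}, p)$ using the same greedy subadditive packing as in the Claim inside the proof of Lemma~\ref{Low_valued}: start with $T = \emptyset$ and add goods one at a time until $v(T) \geq V/2$. Because every added good has value $< V/40$, this yields $v(T) \in [V/2,\, V/2 + V/40)$, and subadditivity gives $v(A^{*}_i(\mathcal{L}, p) \setminus T) \geq V - v(T) > 19V/40$. Let $j^{*} = \argmin_j v(A^{*}_j(\mathcal{L}, p))$ and $b \vcentcolon= v(A^{*}_{j^*}(\mathcal{L}, p))$. Since $\mathcal{A}^{*}(\mathcal{L}, 1)$ maximizes the $1$-mean and Feige's algorithm ensures ${\rm M}_1(\mathcal{A}^{*}(\mathcal{L}, 1)) \leq 2\,{\rm F}(\mathcal{L})$, we get $\frac{1}{N}\sum_j v(A^{*}_j(\mathcal{L}, p)) \leq 2\,{\rm F}(\mathcal{L})$, so $b \leq 2\,{\rm F}(\mathcal{L})$ and hence $b/V < 2/11.33$. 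Define $\mathcal{A}'$ by the two-agent swap handing $T$ to $j^{*}$ and $(A^{*}_i(\mathcal{L}, p) \setminus T) \cup A^{*}_{j^*}(\mathcal{L}, p)$ to $i$, with all other bundles unchanged; by monotonicity the two new values are at least $V/2$ and $19V/40$, respectively.

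Only the $i$ and $j^{*}$ summands of $\sum_j v(A_j)^p$ change. For $p \in (0, 0.4)$ the sought improvement reduces to the algebraic inequality
\[
\Bigl(\tfrac{1}{2}\Bigr)^{p} + \Bigl(\tfrac{19}{40}\Bigr)^{p} - 1 \;>\; \Bigl(\tfrac{b}{V}\Bigr)^{p},
\]
and after substituting $b/V \leq 2/11.33$ it suffices to check that $f(p) \vcentcolon= (1/2)^p + (19/40)^p - (2/11.33)^p - 1$ is strictly positive on $(0, 0.4)$. This I verify from $f(0) = 0$, $f'(0) = \ln(19/80) - \ln(2/11.33) > 0$, and $f(0.4) > 0$ by direct computation. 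For $p < 0$ the map $x \mapsto x^p$ is decreasing, so the same lower bounds on the new bundle values translate into upper bounds on the new $v^p$, and the direction is forced by the negative exponent $1/p$; setting $q = -p > 0$ reduces matters to $2^q + (40/19)^q - 1 < (V/b)^q$, which holds because $\ln(V/b) \geq \ln 5.665 > \ln 2 + \ln(40/19)$ gives the correct derivative at $q = 0$ and $(V/b)^q$ dominates $2^q + (40/19)^q$ asymptotically. The $p = 0$ (Nash) case is immediate from the geometric-mean ratio $(V/2)(19V/40)/(V\,b) = 19V/(80b) > 1$, using $b < V/5.665$.

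The main obstacle is the tight calibration at the endpoint $p = 0.4$: the decisive inequality collapses to equality in the limit $b/V = 2/11.33$, $p = 0.4$, which is precisely why the lemma's range is the open interval $(-\infty, 0.4)$ rather than its closure. Once strict positivity of $f$ on $(0, 0.4)$ and its reversed analog on $(-\infty, 0)$ are verified (a routine one-variable check combining derivative signs at $q = 0$ with asymptotic dominance), the improved welfare of $\mathcal{A}'$ contradicts $p$-optimality of $\mathcal{A}^{*}(\mathcal{L}, p)$, completing the contrapositive proof.
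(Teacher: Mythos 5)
Your construction is the same as the paper's: assume for contradiction that every good in $A^{*}_i(\mathcal{L},p)$ has value below $\frac{1}{40}v(A^{*}_i(\mathcal{L},p))$, greedily peel off a sub-bundle $T$ with $v(T)\in[V/2,\,V/2+V/40)$, donate it to a bundle of value at most $2\,{\rm F}(\mathcal{L})$ (your choice of the argmin is a valid instance of the paper's ``some bundle with value $\leq 2{\rm F}(\mathcal{L})$''), and reduce everything to the numeric inequality $(\tfrac{19}{40})^p+(\tfrac12)^p \gtrless 1+(\tfrac{2}{11.33})^p$ with the sign flipping across $p=0$. The $p=0$ product computation also matches. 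So the route is essentially identical.

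The one genuine gap is in your verification of the numeric inequality. For $p\in(0,0.4)$ you assert that $f(0)=0$, $f'(0)>0$, and $f(0.4)>0$ establish $f>0$ on $(0,0.4)$; they do not --- a function with those three properties can still dip below zero in the interior, and since $f(0.4)\approx 7\times10^{-4}$ the margin is far too thin to dismiss this informally. The paper closes exactly this hole with a structural fact (Appendix~\ref{app 3}): at any critical point $\overline{p}$ of $f$ one can substitute $c^{\overline{p}}\log c = a^{\overline{p}}\log a + b^{\overline{p}}\log b$ into $f''$ and conclude $f''(\overline{p})<0$, so every critical point is a local maximum; combined with $f(0)=0$ and the existence of a root $r\in(0.4,0.41)$, this forces $f\geq 0$ on $[0,r]$. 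Your argument for $p<0$ has the same defect (``correct derivative at $q=0$'' plus ``asymptotic dominance'' does not exclude a sign change in between); the paper again needs the unique-critical-point claim on $(-1,0)$ and a separate explicit bound $5^{|p|}\geq 2.2^{|p|}+2^{|p|}$ for $p\leq -1$. So the proof skeleton is right, but the ``routine one-variable check'' you defer is precisely the nontrivial part, and the specific justification you offer for it is logically insufficient as stated.
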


The proof of the above lemma is divided into  three parts (Sections~\ref{subsection:p-infty-zero-good-transfer}, \ref{subsection:p-zero-half-good-transfer}, and~\ref{subsection:p-zero-good-transfer}) depending on the range of the exponent parameter $p.$ 
\subsection{Proof of Lemma \ref{Good_Transfer} for $p \in (-\infty,0)$}
\label{subsection:p-infty-zero-good-transfer}

Assume, towards a contradiction, that $v(A^{*}_i(\mathcal{L},p)) > 11.33\ {\rm F}(\mathcal{L}) $, for some $i \in [N] $, and $v(g) \leq  \frac{1}{40} v(A^{*}_i(\mathcal{L},p))$ for all $g \in A^{*}_i(\mathcal{L},p).$ Recall that the $1$-mean welfare of the allocation returned by Feige's algorithm satisfies ${\rm F}(\mathcal{L}) \geq \frac{1}{2}{\rm M}_1({\mathcal{A}^{*}}(\mathcal{L},1))$. Pick a bundle $A^{*}_j(\mathcal{L},p)$ (in $\mathcal{A}^*(\mathcal{L}, p)$) with the property that $v(A^{*}_j(\mathcal{L},p)) \leq 2{\rm F}(\mathcal{L})$. Such a bundle exists, since  $2{\rm F}(\mathcal{L})\geq {\rm M}_1({\mathcal{A}^{*}}(\mathcal{L},1)) \geq {\rm M}_1(\mathcal{A}^*(\mathcal{L},p))$. 

Define a partition---$A'_i$ and $A'_j$---of $A^*_i$ as follows: \\
\noindent
(i) Initialize $A'_j$ to be the empty set. Then, we keep transferring goods from $A_i^*(\mathcal{L},p)$ to $A'_j$ (one at a time and in an arbitrary order) and stop as soon as the value of $A'_j$ exceeds $\frac{v(A^*_i(\mathcal{L},p))}{2}$.\\
(ii) Denote the remaining set of goods as $A'_i \coloneqq A^*_i(\mathcal{L},p) \setminus A'_j$. Note that, by construction, $v(A'_j)\geq \frac{v(A^{*}_i(\mathcal{L},p))}{2}$ and 
$v(A'_i)\geq \left (\frac{1}{2}-\frac{1}{40} \right ) v(A^*_i(\mathcal{L},p))$. The last inequality follows from the fact that $v$ is subadditive and the assumption that goods in $A^*_i(\mathcal{L},p)$ are of value at most $\frac{1}{40}v(A^*_i(\mathcal{L},p))$. \\
(iii) Write $\mathcal{B}= \{B_1, \ldots, B_N\}$ to denote the allocation obtained by replacing the bundles $A^{*}_i(\mathcal{L},p)$ and $A^{*}_j(\mathcal{L},p)$ in $\mathcal{A}^*(\mathcal{L},p)$ by $A'_i$ and $A^{*}_j(\mathcal{L},p) \cup A'_j$, respectively: $B_i = A'_i$ and $B_j = A^*_j(\mathcal{L},p) \cup A'_j$ along with $B_\ell = A^*_\ell (\mathcal{L},p)$ for all $\ell \in [N] \setminus \{i, j \}$. 

We will show that $\mathcal{B}$ has $p$-mean welfare strictly greater than that of the $p$-optimal allocation ${\mathcal{A}^{*}}(\mathcal{L},p)$. Hence, by way of contradiction, the desired result follows.

Recall that the current case addresses exponent parameters that are negative, $p \in (-\infty, 0)$. Hence, the following  inequality implies that the $p$-mean welfare of $\mathcal{B}$ is strictly greater than that of $\mathcal{A}^*(\mathcal{L},p)$:
\begin{align}
v(A'_i)^p+v(A'_j)^p< v(A^{*}_i(\mathcal{L},p))^p+v(A^{*}_j(\mathcal{L},p))^p \label{ineq:desired-neg-p}
\end{align}

However, for negative $p$, the lower bounds on the values of $A'_i$ and $A'_j$ gives us
\begin{align*}
v(A'_i)^p+v(A'_j)^p\leq \left (\frac{1}{2}-\frac{1}{40} \right )^p v(A^{*}_i(\mathcal{L},p))^p + \left (\frac{1}{2} \right )^pv(A^{*}_i(\mathcal{L},p))^p.
\end{align*}

In addition, using the bounds $v(A^*_j(\mathcal{L},p)) \leq 2 {\rm F}( \mathcal{L}) < \frac{2}{11.33} v(A^*_i(\mathcal{L},p))$, we get  
\begin{align*}
v(A^{*}_i(\mathcal{L},p))^p+ v(A^{*}_j(\mathcal{L},p))^p & > v(A^{*}_i(\mathcal{L},p))^p+ \left( \frac{2}{11.33} \right)^p v(A^{*}_i(\mathcal{L},p))^p.
\end{align*}

Therefore, the desired equation (\ref{ineq:desired-neg-p}) follows from the following numeric inequality, which is established in Appendix \ref{app 3}.
\begin{align*}
 \left (\frac{1}{2}-\frac{1}{40} \right )^p + \left (\frac{1}{2} \right )^p &\leq 1+ \left( \frac{2}{11.33} \right)^p \quad \text{ for all } p\in (-\infty,0).
\end{align*}
This establishes Lemma \ref{Good_Transfer} for $p \in (-\infty,0)$.

\subsection{Proof of Lemma \ref{Good_Transfer} for $p \in (0, 0.4)$}
\label{subsection:p-zero-half-good-transfer}
Assume, towards a contradiction, that $v(A^{*}_i(\mathcal{L},p)) > 11.33\ {\rm F}(\mathcal{L}) $, for some $i \in [N] $, and $v(g) \leq  \frac{1}{40} v(A^{*}_i(\mathcal{L},p))$ for all $g \in A^{*}_i(\mathcal{L},p).$ Recall that the $1$-mean welfare of the allocation returned by Feige's algorithm satisfies ${\rm F}(\mathcal{L}) \geq \frac{1}{2}{\rm M}_1({\mathcal{A}^{*}}(\mathcal{L},1))$. Pick a bundle $A^{*}_j(\mathcal{L},p)$ (in $\mathcal{A}^*(\mathcal{L}, p)$) with the property that $v(A^{*}_j(\mathcal{L},p)) \leq 2{\rm F}(\mathcal{L})$. Such a bundle exists, since  $2{\rm F}(\mathcal{L})\geq {\rm M}_1({\mathcal{A}^{*}}(\mathcal{L},1)) \geq {\rm M}_1(\mathcal{A}^*(\mathcal{L},p))$. 

Define a partition---$A'_i$ and $A'_j$---of $A^*_i(\mathcal{L},p)$ as follows: \\
\noindent
(i) Initialize $A'_j$ to be the empty set. Then, we keep transferring goods from $A_i^*(\mathcal{L},p)$ to $A'_j$ (one at a time and in an arbitrary order) and stop as soon as the value of $A'_j$ exceeds $\frac{v(A^*_i(\mathcal{L},p))}{2}$.\\
(ii) Denote the remaining set of goods as $A'_i \coloneqq A^*_i(\mathcal{L},p) \setminus A'_j$. Note that, by construction, $v(A'_j)\geq \frac{v(A^{*}_i(\mathcal{L},p))}{2}$ and 
$v(A'_i)\geq \left (\frac{1}{2}-\frac{1}{40} \right ) v(A^*_i(\mathcal{L},p))$. The last inequality follows from the fact that $v$ is subadditive and the assumption that goods in $A^*_i(\mathcal{L},p)$ are of value at most $\frac{1}{40}v(A^*_i(\mathcal{L},p))$. \\
(iii) Write $\mathcal{B}= \{B_1, \ldots, B_N\}$ to denote the allocation obtained by replacing the  bundles $A^{*}_i(\mathcal{L},p)$ and $A^{*}_j(\mathcal{L},p)$ in $\mathcal{A}^*(\mathcal{L},p)$ by $A'_i$ and $A^{*}_j(\mathcal{L},p) \cup A'_j$, respectively: $B_i = A'_i$ and $B_j = A^*_j(\mathcal{L},p) \cup A'_j$ along with $B_\ell = A^*_\ell (\mathcal{L},p)$ for all $\ell \in [N] \setminus \{i, j \}$. 

We will show that $\mathcal{B}$ has $p$-mean welfare strictly greater than that of the $p$-optimal allocation ${\mathcal{A}^{*}}(\mathcal{L},p)$. Hence, by way of contradiction, the desired result follows.

Notice that the construction of $A'_i$, $A'_j$ and $\mathcal{B}$ hold for $p=0$ as well. We will use these sets in the next section to prove an analogous result for Nash social welfare. 

The current case addresses exponent parameters that are positive $p \in (0, 0.4)$. Hence, the following  inequality implies that the $p$-mean welfare of $\mathcal{B}$ is strictly greater than that of $\mathcal{A}^*(\mathcal{L},p)$
\begin{align}
v(A'_i)^p+v(A'_j)^p & > v(A^{*}_i(\mathcal{L},p))^p+v(A^{*}_j(\mathcal{L},p))^p \label{ineq:desired-pos-p}
\end{align}

However, for positive $p$, the lower bounds on the values of $A'_i$ and $A'_j$ gives us
\begin{align*}
v(A'_i)^p+v(A'_j)^p & \geq \left (\frac{1}{2}-\frac{1}{40} \right )^p v(A^{*}_i(\mathcal{L},p))^p + \left (\frac{1}{2} \right )^pv(A^{*}_i(\mathcal{L},p))^p.
\end{align*}

In addition, using the bounds $v(A^*_j(\mathcal{L},p)) \leq 2 {\rm F}( \mathcal{L}) < \frac{2}{11.33} v(A^*_i(\mathcal{L},p))$, we get  
\begin{align*}
v(A^{*}_i(\mathcal{L},p))^p+ v(A^{*}_j(\mathcal{L},p))^p & < v(A^{*}_i(\mathcal{L},p))^p + \left( \frac{2}{11.33} \right)^pv(A^{*}_i(\mathcal{L},p))^p.
\end{align*}

Therefore, the desired equation (\ref{ineq:desired-pos-p}) follows from the following numeric inequality, which is established in Appendix \ref{app 3}.
\begin{align*}
 \left (\frac{1}{2}-\frac{1}{40} \right )^p + \left (\frac{1}{2} \right )^p &\geq 1+ \left( \frac{2}{11.33} \right)^p \hbox{ for } p\in (0,0.4).
\end{align*}

This establishes Lemma \ref{Good_Transfer} for $p \in (0, 0.4)$.


\subsection{Proof of Lemma \ref{Good_Transfer} for Nash Social Welfare ($p=0$)}
\label{subsection:p-zero-good-transfer}

Recall the sets $A'_i$, $A'_j$, and the allocation $\mathcal{B}$ defined in Section \ref{subsection:p-zero-half-good-transfer}. We will show that $\mathcal{B}$ has Nash welfare strictly greater than that of ${\mathcal{A}^{*}}(\mathcal{L}, 0)$, and hence, by way of contradiction, establish the desired result.

 In order to obtain ${\rm M}_0(\mathcal{B})>{\rm M}_0(\mathcal{A}^*(\mathcal{L},0))$, it suffices to prove that $v(A'_i)v(A'_j)>v\left( A^{*}_i(\mathcal{L},0)\right)v( A^{*}_j(\mathcal{L},0)).$
The lower bounds we obtained on the values of $A'_i$ and $A'_j$ gives us 
\begin{align*}
v(A'_i)v(A'_j)&\geq \left (\frac{1}{2}-\frac{1}{40} \right )v\left( A^{*}_i (\mathcal{L},0)\right) \frac{v\left( A^{*}_i (\mathcal{L},0)\right)}{2}
\geq  0.2 \ v\left( A^{*}_i (\mathcal{L},0)\right)^2 
\end{align*}

In addition, we have 
\begin{align*}
v\left( A^{*}_i (\mathcal{L},0)\right) v\left( A^{*}_j (\mathcal{L},0)\right)  & <\frac{2}{11.33} \  v\left( A^{*}_i (\mathcal{L},0)\right)^2
<0.18 \ v\left( A^{*}_i (\mathcal{L},0)\right)^2 
\end{align*}
Therefore, the lemma holds for $p=0$ as well. 


\section{Combination Lemma} \label{section:stitching-lemma}
The following lemma shows that the goods assigned as singletons in the while-loop of $\textsc{Alg}$ (Algorithm~\ref{Alg}), along with a $p$-optimal allocation of instance $\mathcal{J}$ that remains at the termination of the loop, lead to a $p$-mean welfare that is comparable to the optimal, for all $p \in (-\infty, 0.4)$.

As shown previously in Lemma~\ref{Low_valued}, $\textsc{AlgLow}$---with instance $\mathcal{J}$ as input---achieves a constant-factor approximation for the $p$-mean welfare objective. Hence, the lemma established in this section will enable us to combine the welfare guarantees of the goods assigned in the while-loop of \textsc{Alg} and the allocation computed by \textsc{AlgLow} to obtain the desired approximation result for $p \in (-\infty, 0.4)$. 

Specifically, given a fair-division instance $\mathcal{I} = \langle [m], [n], v \rangle$ as input, let $\{g_1, \ldots, g_k\}$ denote the set of goods that get assigned as singletons in the while-loop of $\textsc{Alg}$ (Algorithm~\ref{Alg}). Furthermore, for $1 \leq t \leq k$, let $\mathcal{I}^t$ denote the instance obtained at the end of the $t$th iteration of this while-loop. Since in the first $t$ iterations $\textsc{Alg}$ assigns goods $\{g_1, \ldots, g_t\}$ to the first $t$ agents as singletons, we have $\mathcal{I}^t = \langle [m]\setminus \{g_1, \ldots, g_t \}, [n]\setminus \{1, \ldots, t\}, v \rangle$. In particular, $\mathcal{J} \coloneqq \mathcal{I}^k$ is the instance that remains after the termination of the while-loop in $\textsc{Alg}$ and this instance is passed on to $\textsc{AlgLow}$ as input.

Instance $\mathcal{J}$ consists of $n-k$ agents and, hence, in $\mathcal{J}$, any $p$-optimal allocation $\mathcal{A}^*(\mathcal{J},p)$ contains $(n-k)$ bundles. For notational convenience, we will index these bundles from $k+1$ to $n$, i.e., $\mathcal{A}^*(\mathcal{J},p)= \left( A^*_{k+1} (\mathcal{J},p), \ldots, A^*_{n} (\mathcal{J},p) \right)$.


\begin{Lemma}\label{Induction_argument}
Given a fair-division instance $\mathcal{I} = \langle [m], [n], v \rangle$ with an identical subadditive valuation $v$, let $\{g_1, \ldots, g_k\}$ denote the set of goods that get assigned as singletons in the while-loop of $\textsc{Alg}$ and let $\mathcal{J}=\langle [m] \setminus \{g_1, \ldots, g_k \}, [n] \setminus [k], v \rangle$ be the instance that remains after the termination of this loop. In addition, let $\mathcal{A}^*(\mathcal{I},p)=\left(A^*_1 (\mathcal{I},p), \ldots, A^*_n(\mathcal{I},p) \right)$ and $\mathcal{A}^*(\mathcal{J},p) = \left( A^*_{k+1} (\mathcal{J},p), \ldots, A^*_{n} (\mathcal{J},p) \right)$ denote $p$-optimal allocations of instances $\mathcal{I}$ and $\mathcal{J}$, respectively. Then, with constant $\alpha = 40$,   
\begin{itemize}
\item For $p \in (-\infty, 0)$, we have $\alpha^p  \sum \limits_{i=1}^k v(g_i)^p \ + \ \sum \limits_{j=k+1}^{n} v (A^{*}_j(\mathcal{J},p) )^p \leq \sum \limits_{i=1}^n v(A^{*}_i(\mathcal{I},p ))^p$.  

\item For $p \in (0,0.4)$, we have $\alpha^p \sum \limits_{i=1}^k v \left( g_i\right)^p  \ +  \ \sum \limits_{j=k+1}^{n} v\left(A^{*}_j(\mathcal{J},p)\right)^p \geq \sum \limits_{i=1}^n v(A^{*}_i(\mathcal{I},p ))^p$.
\item  For $p=0$, we have $\alpha^k \prod \limits_{i=1}^k v(g_i) \  \prod \limits_{j=k+1}^{n} v(A^*_j (\mathcal{J},p) ) \geq  \prod \limits_{i=1}^{n} v(A^*_i(\mathcal{I},p))$. 
 
\end{itemize}
\end{Lemma}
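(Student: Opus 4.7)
The plan is to prove all three assertions simultaneously by induction on the number of iterations $t = 0, 1, \ldots, k$ executed by the while-loop of $\textsc{Alg}$, peeling off one iteration at a time. The base case $t = 0$ is a tautology since $\mathcal{I}^0 = \mathcal{I}$. For the inductive step, I will establish a single ``one-step'' relation between the $p$-optimal values of $\mathcal{I}^{t-1}$ and $\mathcal{I}^t$: writing $B$ for the bundle of $\mathcal{A}^*(\mathcal{I}^{t-1},p)$ that contains $g_t$, I will show
\begin{align*}
\alpha^p\, v(g_t)^p + \sum_{j} v(A^*_j(\mathcal{I}^{t},p))^p \ \le \ \sum_{i} v(A^*_i(\mathcal{I}^{t-1},p))^p
\end{align*}
for $p \in (-\infty,0)$, the reverse inequality for $p \in (0,0.4)$, and the multiplicative form $\alpha\, v(g_t) \prod_{j} v(A^*_j(\mathcal{I}^t,0)) \ge \prod_{i} v(A^*_i(\mathcal{I}^{t-1},0))$ for $p = 0$. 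Telescoping each one-step relation over $t = 1, \ldots, k$ immediately delivers the three claims with $\alpha = 40$.

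To prove the one-step relation, I construct a candidate allocation $\widetilde{\mathcal{A}}$ of $\mathcal{I}^t$ out of $\mathcal{A}^*(\mathcal{I}^{t-1},p)$ by discarding the bundle $B$ (together with $g_t$ and its owner) and dumping the residual goods $B \setminus \{g_t\}$ into some surviving bundle $A^*_{j^*}(\mathcal{I}^{t-1},p)$; every other bundle is kept intact. Monotonicity of $v$ guarantees that the enlarged bundle still has value at least $v(A^*_{j^*}(\mathcal{I}^{t-1},p))$. I then invoke the optimality of $\mathcal{A}^*(\mathcal{I}^t, p)$ on $\mathcal{I}^t$ in the correct direction: for $p > 0$ (and for $p = 0$ with the product) optimality of $p$-mean welfare means \emph{maximizing} $\sum_j v(\cdot)^p$, whereas for $p < 0$ the outer exponent $1/p$ is negative, so $\mathcal{A}^*(\mathcal{I}^t, p)$ is a \emph{minimizer} of $\sum_j v(\cdot)^p$. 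Combined with the fact that $x \mapsto x^p$ is increasing for $p > 0$ and decreasing for $p < 0$, this bounds the $\mathcal{I}^t$ term in the right direction by $\sum_{i \ne i_0} v(A^*_i(\mathcal{I}^{t-1},p))^p$, where $i_0$ is the index of $B$; after cancelling the common ``off-$B$'' terms, all three one-step relations reduce to the single scalar inequality
\begin{align*}
\alpha\, v(g_t) \ \ge \ v(B),
\end{align*}
which, upon raising to the $p$-th power (or used directly at $p = 0$), has the right orientation in each regime.

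The entire proof therefore hinges on the scalar bound $v(B) \le 40\, v(g_t)$, and this is the point at which Lemma~\ref{Good_Transfer} becomes essential. I argue by a dichotomy on the size of $v(B)$. If $v(B) \le 11.33\, {\rm F}(\mathcal{I}^{t-1})$, then the while-loop selection criterion $v(g_t) \ge \frac{1}{3.53}{\rm F}(\mathcal{I}^{t-1})$ at once gives $v(B) \le 11.33 \cdot 3.53\, v(g_t) \le 40\, v(g_t)$. Otherwise $v(B) > 11.33\, {\rm F}(\mathcal{I}^{t-1})$, and Lemma~\ref{Good_Transfer} applied to $\mathcal{I}^{t-1}$ produces a good $h \in B$ with $v(h) \ge \tfrac{1}{40} v(B)$; because $\textsc{Alg}$ sorted the goods in non-increasing order of value and has already removed $g_1, \ldots, g_{t-1}$, the good $g_t$ has the maximum value in $\mathcal{I}^{t-1}$, so $v(g_t) \ge v(h) \ge \tfrac{1}{40} v(B)$. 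Either branch closes the induction. The main obstacle is really just keeping the three regimes' signs and inequality directions synchronized; once the candidate allocation $\widetilde{\mathcal{A}}$ is in place, the argument is uniform across all $p \in (-\infty, 0.4)$.
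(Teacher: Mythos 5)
Your proposal is correct and follows essentially the same route as the paper's proof: the same induction over the while-loop iterations, the same dichotomy on $v(B)$ versus $11.33\,{\rm F}(\mathcal{I}^{t-1})$ (using the selection threshold $\tfrac{1}{3.53}{\rm F}(\mathcal{I}^{t-1})$ in one branch and Lemma~\ref{Good_Transfer} plus the sorted order of goods in the other) to get the key scalar bound $v(B)\le 40\,v(g_t)$, and the same appeal to the optimality of $\mathcal{A}^*(\mathcal{I}^{t},p)$ to compare the surviving bundles across consecutive instances. The only cosmetic difference is that you make the comparison step explicit by dumping $B\setminus\{g_t\}$ into a surviving bundle to form a genuine allocation of $\mathcal{I}^t$, where the paper argues via a containment of the unions of bundles; the substance is identical.
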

We will prove Lemma~\ref{Induction_argument} by considering different ranges of the exponent parameter $p$ separately. However, in all of the ranges, the desired inequality is obtained by inducting on the number of iterations of the while-loop in \textsc{Alg}.

\subsection{Proof of Lemma \ref{Induction_argument} for $p\in(-\infty,0)$}
For $0 \leq t \leq k$, recall that instance $\mathcal{I}^t \coloneqq \langle [m]\setminus \{g_1, \ldots, g_t \}, [n]\setminus \{1, \ldots, t\}, v \rangle$ and its corresponding $p$-mean optimal, $\mathcal{A}^*(\mathcal{I}^t,p)=\left( A^*_{t+1}(\mathcal{I}^{t},p),\ldots , A^*_n(\mathcal{I}^t,p)\right)$. 

We prove by induction over all $0\leq t \leq k, $ that
\begin{align}\label{7}
(40) ^p\sum \limits_{i=1}^t v \left( g_i\right)^p + \sum \limits_{j=t+1}^{n} v\left(A^{*}_j(\mathcal{I}^t,p)\right)^p & \leq \sum \limits_{i=1}^n v(A^{*}_i(\mathcal{I},p))^p
\end{align}
\emph{Base Case:} When $t=0,$ we have $\mathcal{I}^t = \mathcal{I}$ and, hence, both sides of equation (\ref{7}) are equal to each other. Therefore, the base case holds.

\noindent
\emph{Induction Step:} We establish inequality  (\ref{7}) for $t$, assuming that it holds for $t-1$. 

 Consider the good $g_{t}$ that was assigned in the $t${th} iteration of the while-loop in \textsc{Alg}. Note that $v(g_{t}) \geq \frac{1}{3.53}{\rm F}(\mathcal{I}^{t-1})$ (see Step \ref{Threshold}). Without loss of generality, we may assume that $g_{t} \in A^{*}_{t}(\mathcal{I}^{t-1},p)$. This assumption is justified since ${\mathcal{A}^{*}}(\mathcal{I}^{t-1},p)$ is a $p$-optimal allocation of the instance $\mathcal{I}^{t-1}$ and, hence, ${\mathcal{A}^{*}}(\mathcal{I}^{t-1},p)$ is an $(n-t+1)$-partition of the goods $[m] \setminus \{g_1, g_2, \ldots, g_{t-1}\}$, i.e., $g_t$ belongs to one of bundles in ${\mathcal{A}^{*}}(\mathcal{I}^{t-1},p) =\left( A^*_{t}(\mathcal{I}^{t-1},p),\ldots , A^*_n(\mathcal{I}^{t-1},p) \right)$. 
 
 We lower bound the value of $g_t$ in terms of the value of the bundle $A^*_t(\mathcal{I}^{t-1},p)$.

\noindent {Case {\rm I}:} $v(A^*_{t} (\mathcal{I}^{t-1},p)) \leq 11.33 \ {\rm F}(\mathcal{I}^{t-1})$. In this case, we have $v(g_{t}) \geq \frac{1}{3.53} {\rm F}(\mathcal{I}^{t-1}) \geq \frac{1}{40} v({A_{t}^{*}}(\mathcal{I}^{t-1},p))$; since, $3.53 \times 11.33 \leq 40$. \\

\noindent {Case {\rm II}:} $v(A^*_{t}(\mathcal{I}^{t-1},p)) > 11.33 \ {\rm F}(\mathcal{I}^{t-1})$. Recall that the goods are indexed in non-increasing order of value (see Step \ref{Ordered_Goods} of \textsc{Alg}) and, hence, $g_t$ is the highest valued good in the instance $\mathcal{I}^{t-1}$. Therefore, Lemma \ref{Good_Transfer} gives us 
\begin{align}
v(g_{t})\geq \frac{1}{40} v(A^*_{t}(\mathcal{I}^{t-1},p )) \label{8}
\end{align}
Note that inequality (\ref{8}) holds in both Cases {\rm I} and {\rm II} mentioned above. Furthermore, since the current case addresses negative $p \in (-\infty, 0)$, we have $ (40)^p \ v(g_{t})^p\leq v(A^*_{t}(\mathcal{I}^{t-1},p  ))^p$. 

We add $(40) ^p \sum \limits_{i=1}^{t-1} v \left( g_i\right)^p + \sum \limits_{j=t+1}^{n} v\left(A^{*}_j(\mathcal{I}^{t},p)\right)^p $ to both sides of the previous inequality to obtain

\begin{align}
(40) ^p \sum \limits_{i=1}^{t} v \left( g_i\right)^p + \sum \limits_{j=t+1}^{n} v\left(A^{*}_j(\mathcal{I}^{t},p)\right)^p & \leq (40) ^p \sum \limits_{i=1}^{t-1} v \left( g_i\right)^p +v(A^*_{t}(\mathcal{I}^{t-1},p  ))^p +\sum \limits_{j=t+1}^{n} v\left(A^{*}_j(\mathcal{I}^{t},p)\right)^p \label{ineq:interim}
\end{align}

Note that the allocation ${\mathcal{A}^{*}}(\mathcal{I}^{t-1},p)$ is defined over the goods $[m]\setminus \{g_1,\ldots , g_{t-1}\}$ and the bundle $A^{*}_{t}(\mathcal{I}^{t-1},p )$ contains $g_{t}$. On the other hand, ${\mathcal{A}^{*}}(\mathcal{I}^{t},p)$ is defined over $[m]\setminus \{g_1,\ldots , g_{t}\}$. Hence, all the goods in ${\mathcal{A}^*}(\mathcal{I}^{t-1},p)$, with the exception of $g_{t}$, appear in ${\mathcal{A}^{*}}(\mathcal{I}^{t},p)$. 

In other words, the last $n-t$ bundles of $\mathcal{A}^*(\mathcal{I}^{t-1},p)$ and all the $n-t$ bundles of $\mathcal{A}^*(\mathcal{I}^{t},p)$ satisfy $\bigcup \limits _{j=t+1}^n A^{*}_j(\mathcal{I}^{t-1},p) \subseteq \bigcup \limits _{j=t+1}^n A^{*}_j(\mathcal{I}^{t},p)$. Using this containment and the fact that ${\mathcal{A}^{*}}(\mathcal{I}^{t},p)$ is the $p$-optimal allocation for the instance $\mathcal{I}^{t}$, we have
\begin{align*}
\left(\frac{1}{n-t} \sum \limits_{j=t+1}^{n} v\left(A^{*}_j(\mathcal{I}^{t-1},p)\right)^p\right)^\frac{1}{p} & \leq \left( \frac{1}{n-t} \sum \limits_{j=t+1}^{n} v\left(A^{*}_j( \mathcal{I}^{t},p)\right)^p \right)^\frac{1}{p}.
\end{align*}

Exponentiating both sides by $p$ (which in the current case is negative) and multiplying by $n-t$, gives us $ \sum \limits_{j=t+1}^{n} v\left(A^{*}_j(\mathcal{I}^{t-1},p)\right)^p \geq \sum \limits_{j=t+1}^{n} v\left(A^{*}_j(\mathcal{I}^{t},p)\right)^p .$ Therefore, inequality (\ref{ineq:interim}) extends to

\begin{align*}
(40) ^p \sum \limits_{i=1}^{t} v \left( g_i\right)^p + \sum \limits_{j=t+1}^{n} v\left(A^{*}_j(\mathcal{I}^{t},p)\right)^p & \leq 40 ^p \sum \limits_{i=1}^{t-1} v \left( g_i\right)^p + \sum \limits_{j=t}^{n} v\left(A^{*}_j(\mathcal{I}^{t-1},p)\right)^p \\
& \leq \sum \limits _{i=1}^n v(A^{*}_i(\mathcal{I},p))^p
\end{align*}
The last inequality follows from the induction hypothesis. Setting $t=k$ gives us the desired inequality for $p\in(-\infty,0)$
\begin{align}\label{9}
( 40 )^p\sum \limits_{i=1}^k v \left( g_i\right)^p + \sum \limits_{j=k+1}^{n} v\left(A^{*}_j(\mathcal{I}^k,p)\right)^p & \leq \sum \limits_{i=1}^n v(A^{*}_i(\mathcal{I},p))^p 
\end{align}
Therefore, Lemma \ref{Induction_argument} holds for $p\in (-\infty,0)$.


\subsection{Proof of Lemma \ref{Induction_argument} for $p\in (0,0.4)$}
For $0 \leq t \leq k$, recall that instance $\mathcal{I}^t \coloneqq \langle [m]\setminus \{g_1, \ldots, g_t \}, [n]\setminus \{1, \ldots, t\}, v \rangle$ and its corresponding $p$-mean optimal, $\mathcal{A}^*(\mathcal{I}^t,p)=\left( A^*_{t+1}(\mathcal{I}^{t},p),\ldots , A^*_n(\mathcal{I}^t,p)\right)$. 

We prove by induction over all $0\leq t \leq k, $ that
\begin{align} 
(40) ^p\sum \limits_{i=1}^t v \left( g_i\right)^p + \sum \limits_{j=t+1}^{n} v\left(A^{*}_j(\mathcal{I}^t,p)\right)^p & \geq \sum \limits_{i=1}^n v(A^{*}_i(\mathcal{I},p))^p \label{ineq:ind}
\end{align}
\emph{Base Case:} When $t=0,$ we have $\mathcal{I}^t = \mathcal{I}$ and, hence, both sides of equation (\ref{7}) are equal to each other. Therefore, the base case holds.

\noindent
\emph{Induction Step:} We establish inequality  (\ref{ineq:ind}) for $t$, assuming that it holds for $t-1$. 

Consider the good $g_{t}$ that was assigned in the $t${th} iteration of the while-loop in \textsc{Alg}. Note that $v(g_{t}) \geq \frac{1}{3.53}{\rm F}(\mathcal{I}^{t-1})$ (see Step \ref{Threshold}). Without loss of generality, we may assume that $g_{t} \in A^{*}_{t}(\mathcal{I}^{t-1},p)$. This assumption is justified since ${\mathcal{A}^{*}}(\mathcal{I}^{t-1},p)$ is a $p$-optimal allocation of the instance $\mathcal{I}^{t-1}$ and, hence, ${\mathcal{A}^{*}}(\mathcal{I}^{t-1},p)$ is an $(n-t+1)$-partition of the goods $[m] \setminus \{g_1, g_2, \ldots, g_{t-1}\}$, i.e., $g_t$ belongs to one of bundles in ${\mathcal{A}^{*}}(\mathcal{I}^{t-1},p) =\left( A^*_{t}(\mathcal{I}^{t-1},p),\ldots , A^*_n(\mathcal{I}^{t-1},p) \right)$. \\

We lower bound the value of $g_t$ in terms of the value of the bundle $A^*_t(\mathcal{I}^{t-1},p)$.

\noindent {Case {\rm I}:} $v(A^*_{t} (\mathcal{I}^{t-1},p)) \leq 11.33 \ {\rm F}(\mathcal{I}^{t-1})$. In this case, we have $v(g_{t}) \geq \frac{1}{3.53} {\rm F}(\mathcal{I}^{t-1}) \geq \frac{1}{40} v({A_{t}^{*}}(\mathcal{I}^{t-1},p))$; since, $3.53 \times 11.33 \leq 40$.  

\noindent {Case {\rm II}:} $v(A^*_{t}(\mathcal{I}^{t-1},p)) > 11.33 \ {\rm F}(\mathcal{I}^{t-1})$. Recall that the goods are indexed in non-increasing order of value (see Step \ref{Ordered_Goods} of \textsc{Alg}) and, hence, $g_t$ is the highest valued good in the instance $\mathcal{I}^{t-1}$. Therefore, Lemma \ref{Good_Transfer} gives us 
\begin{align}
v(g_{t})\geq \frac{1}{40} v(A^*_{t}(\mathcal{I}^{t-1},p )) \label{ineq:g-t-val}
\end{align}

Note that inequality (\ref{ineq:g-t-val}) holds in both Cases {\rm I} and {\rm II} mentioned above. Furthermore, since the current case addresses positive $p\in (0,0.4)$, we have $ (40)^p \ v(g_{t})^p \geq v(A^*_{t}(\mathcal{I}^{t-1},p  ))^p$. 

We add $(40) ^p \sum \limits_{i=1}^{t-1} v \left( g_i\right)^p + \sum \limits_{j=t+1}^{n} v\left(A^{*}_j(\mathcal{I}^{t},p)\right)^p $ to both sides of the previous inequality to obtain

\begin{align}
(40) ^p \sum \limits_{i=1}^{t} v \left( g_i\right)^p + \sum \limits_{j=t+1}^{n} v\left(A^{*}_j(\mathcal{I}^{t},p)\right)^p & \geq (40) ^p \sum \limits_{i=1}^{t-1} v \left( g_i\right)^p +v(A^*_{t}(\mathcal{I}^{t-1},p  ))^p +\sum \limits_{j=t+1}^{n} v\left(A^{*}_j(\mathcal{I}^{t},p)\right)^p \label{ineq:interim-positive}
\end{align}

Note that the allocation ${\mathcal{A}^{*}}(\mathcal{I}^{t-1},p)$ is defined over the goods $[m]\setminus \{g_1,\ldots , g_{t-1}\}$ and the bundle $A^{*}_{t}(\mathcal{I}^{t-1},p )$ contains $g_{t}$. On the other hand, ${\mathcal{A}^{*}}(\mathcal{I}^{t},p)$ is defined over $[m]\setminus \{g_1,\ldots , g_{t}\}$. Hence, all the goods in ${\mathcal{A}^*}(\mathcal{I}^{t-1},p)$, with the exception of $g_{t}$, appear in ${\mathcal{A}^{*}}(\mathcal{I}^{t},p)$. 

In other words, the last $n-t$ bundles of $\mathcal{A}^*(\mathcal{I}^{t-1},p)$ and all the $n-t$ bundles of $\mathcal{A}^*(\mathcal{I}^{t},p)$ satisfy $\bigcup \limits _{j=t+1}^n A^{*}_j(\mathcal{I}^{t-1},p) \subseteq \bigcup \limits _{j=t+1}^n A^{*}_j(\mathcal{I}^{t},p)$. Using this containment and the fact that ${\mathcal{A}^{*}}(\mathcal{I}^{t},p)$ is the $p$-optimal allocation for the instance $\mathcal{I}^{t}$, we have
\begin{align*}
\left(\frac{1}{n-t} \sum \limits_{j=t+1}^{n} v\left(A^{*}_j(\mathcal{I}^{t-1},p)\right)^p\right)^\frac{1}{p} & \leq \left( \frac{1}{n-t} \sum \limits_{j=t+1}^{n} v\left(A^{*}_j( \mathcal{I}^{t},p)\right)^p \right)^\frac{1}{p}.
\end{align*}

Exponentiating both sides by $p$ (which in the current case is positive) and multiplying by $n-t$, gives us $ \sum \limits_{j=t+1}^{n} v\left(A^{*}_j(\mathcal{I}^{t-1},p)\right)^p \leq \sum \limits_{j=t+1}^{n} v\left(A^{*}_j(\mathcal{I}^{t},p)\right)^p .$ Therefore, via inequality (\ref{ineq:interim-positive}), we obtain
\begin{align*}
(40) ^p \sum \limits_{i=1}^{t} v \left( g_i\right)^p + \sum \limits_{j=t+1}^{n} v\left(A^{*}_j(\mathcal{I}^{t},p)\right)^p & \geq 40 ^p \sum \limits_{i=1}^{t-1} v \left( g_i\right)^p + \sum \limits_{j=t}^{n} v\left(A^{*}_j(\mathcal{I}^{t-1},p)\right)^p \\
& \geq \sum \limits _{i=1}^n v(A^{*}_i(\mathcal{I},p))^p
\end{align*}
The last inequality follows from the induction hypothesis. Setting $t=k$ gives us the desired inequality for $p \in (0, 0.4)$
\begin{align} 
( 40 )^p\sum \limits_{i=1}^k v \left( g_i\right)^p + \sum \limits_{j=k+1}^{n} v\left(A^{*}_j(\mathcal{I}^k,p)\right)^p & \geq \sum \limits_{i=1}^n v(A^{*}_i(\mathcal{I},p))^p 
\end{align}
Therefore, Lemma \ref{Induction_argument} holds for $p \in (0, 0.4)$. 

\subsection{Proof of Lemma \ref{Induction_argument} for Nash Social Welfare ($p=0$)}
\label{subsection:p-zero}

For $0 \leq t \leq k$, recall that instance $\mathcal{I}^t \coloneqq \langle [m]\setminus \{g_1, \ldots, g_t \}, [n]\setminus \{1, \ldots, t\}, v \rangle$ and its corresponding $0$-mean optimal (i.e., Nash optimal), $\mathcal{A}^*(\mathcal{I}^t, 0)=\left( A^*_{t+1}(\mathcal{I}^{t},0),\ldots , A^*_n(\mathcal{I}^t,0)\right)$. 

We prove by induction over all $0\leq t \leq k, $ that

\begin{align}\label{13}
 \prod \limits_{i=1}^t v(g_i) \  \prod \limits_{j=t+1}^{n} v(A^*_j (\mathcal{I}^t,0) ) \geq  \left( \frac{1}{40}\right)^t \prod \limits_{i=1}^{n} v(A^*_i(\mathcal{I},0)).
\end{align}

\noindent
\emph{Base Case:} When $t=0,$ we have $\mathcal{I}^t = \mathcal{I}$ and, hence, both sides of equation (\ref{13}) are equal to each other. Therefore, the base case holds. \\

\noindent
\emph{Induction Step:} We establish inequality  (\ref{13}) for $t$, assuming that it holds for $t-1$.  

 Consider the good $g_{t}$ that was assigned in the $t${th} iteration of the while-loop in \textsc{Alg}. Note that $v(g_{t}) \geq \frac{1}{3.53}{\rm F}(\mathcal{I}^{t-1})$ (see Step \ref{Threshold}). Without loss of generality, we may assume that $g_{t} \in A^{*}_{t}(\mathcal{I}^{t-1},0)$. This assumption is justified since ${\mathcal{A}^{*}}(\mathcal{I}^{t-1}, 0)$ is a $0$-optimal (Nash optimal) allocation of the instance $\mathcal{I}^{t-1}$ and, hence, ${\mathcal{A}^{*}}(\mathcal{I}^{t-1}, 0)$ is an $(n-t+1)$-partition of the goods $[m] \setminus \{g_1, g_2, \ldots, g_{t-1}\}$, i.e., $g_t$ belongs to one of bundles in ${\mathcal{A}^{*}}(\mathcal{I}^{t-1}, 0) =\left( A^*_{t}(\mathcal{I}^{t-1},0),\ldots , A^*_n(\mathcal{I}^{t-1},0) \right)$. \\
 
 We lower bound the value of $g_t$ in terms of the value of the bundle $A^*_t(\mathcal{I}^{t-1},0)$.

\noindent {Case {\rm I}:} $v(A^*_{t} (\mathcal{I}^{t-1},0)) \leq 11.33 \ {\rm F}(\mathcal{I}^{t-1})$. In this case, we have $v(g_{t}) \geq \frac{1}{3.53} {\rm F}(\mathcal{I}^{t-1}) \geq \frac{1}{40} v({A_{t}^{*}}(\mathcal{I}^{t-1},0))$; since, $3.53 \times 11.33 \leq 40$. 

\noindent {Case {\rm II}:} $v(A^*_{t}(\mathcal{I}^{t-1},0)) > 11.33 \ {\rm F}(\mathcal{I}^{t-1})$. Recall that the goods are indexed in non-increasing order of value (see Step \ref{Ordered_Goods} of \textsc{Alg}) and, hence, $g_t$ is the largest valued good in the instance $\mathcal{I}^{t-1}$. Therefore, Lemma \ref{Good_Transfer} gives us 
\begin{align}
v(g_{t})\geq \frac{1}{40} v(A^*_{t}(\mathcal{I}^{t-1},0 )) \label{ineq:val-g-t-0}
\end{align}

Here, inequality (\ref{ineq:val-g-t-0}) holds in both Cases {\rm I} and {\rm II} mentioned above. 

Note that the allocation ${\mathcal{A}^{*}}(\mathcal{I}^{t-1},0)$ is defined over the goods $[m]\setminus \{g_1,\ldots , g_{t-1}\}$ and the bundle $A^{*}_{t}(\mathcal{I}^{t-1},0 )$ contains $g_{t}$. On the other hand, ${\mathcal{A}^{*}}(\mathcal{I}^{t}, 0)$ is defined over $[m]\setminus \{g_1,\ldots , g_{t}\}$. Hence, all the goods in ${\mathcal{A}^*}(\mathcal{I}^{t-1}, 0)$, with the exception of $g_{t}$, appear in ${\mathcal{A}^{*}}(\mathcal{I}^{t}, 0)$. 

In other words, the last $n-t$ bundles of $\mathcal{A}^*(\mathcal{I}^{t-1}, 0)$ and all the $n-t$ bundles of $\mathcal{A}^*(\mathcal{I}^{t}, 0)$ satisfy $\bigcup \limits _{j=t+1}^n A^{*}_j(\mathcal{I}^{t-1},0) \subseteq \bigcup \limits _{j=t+1}^n A^{*}_j(\mathcal{I}^{t},0)$. Using this containment and the fact that ${\mathcal{A}^{*}}(\mathcal{I}^{t}, 0)$ is the $0$-optimal (Nash optimal) allocation for the instance $\mathcal{I}^{t}$, we have
\begin{align}
\prod\limits_{j=t+1}^n v\left(A^{*}_j(\mathcal{I}^{t},0)\right ) > \prod\limits_{j=t+1}^n v\left(A^{*}_j(\mathcal{I}^{t-1},0)\right ) \label{ineq:prod-induction}
\end{align}

Multiplying by $v(g_1) \ v(g_2) \ldots v(g_{t-1}) \ v(g_{t})$ on both sides of the previous inequality and using equation (\ref{ineq:val-g-t-0}), we get
\begin{align*}
v(g_1) \ v(g_2) \ldots v(g_{t-1}) \ v(g_{t}) \prod\limits_{j=t+1}^n v\left(A^{*}_j(\mathcal{I}^{t},0)\right) & \geq \frac{1}{40}v\left(g_1\right)\ldots v\left(g_{t-1}\right) \ v\left(A^{*}_{t}(\mathcal{I}^{t-1},0)\right) \prod\limits_{j=t+1}^n v\left(A^{*}_j(\mathcal{I}^{t},0)\right) \\
& \geq \frac{1}{40}v\left(g_1\right)\ldots v\left(g_{t-1}\right) \ v\left(A^{*}_{t}(\mathcal{I}^{t-1},0)\right) \prod\limits_{j=t+1}^n v\left(A^{*}_j(\mathcal{I}^{t-1},0)\right) \tag{via inequality (\ref{ineq:prod-induction})}\\
  &\geq \left(\frac{1}{40}\right)^{t} \prod \limits_{i=1}^{n} v(A^*_i(\mathcal{I},0))  \tag{using the induction hypothesis}
\end{align*}

Setting $t=k$ gives us the desired inequality for $p =0$
\begin{align*}
(40)^k \prod \limits_{i=1}^k v(g_i) \  \prod \limits_{j=k+1}^{n} v(A^*_j (\mathcal{I}^k,0) ) \geq  \prod \limits_{i=1}^{n} v(A^*_i(\mathcal{I},0)).
\end{align*}
Therefore, Lemma \ref{Induction_argument} holds for $p =0$ as well. \\

Using the lemmas established in the previous sections, we now prove Theorem \ref{MainTheorem} for the exponent parameter $p\in (-\infty , 0.4)$.

\section{Proof of Theorem \ref{MainTheorem} for $p \in (-\infty,0.4)$}
\label{subsection:p-inf-half}

As before, we will write $\mathcal{I}=\I{[m]}{[n]}$ to denote the given fair-division instance. Also, write $\mathcal{J} = \mathcal{I}^k = \I{[m]\setminus \{g_1,\ldots,g_k\}}{[n]\setminus [k]}$ to denote the instance obtained at the termination of the while loop in \textsc{Alg}. That is, instance $\mathcal{J}$ is obtained after allocating the $k$ highest-valued goods as singletons to different agents and $\mathcal{J}$ is passed on as an input to \textsc{AlgLow}. 

Also recall that $\mathcal{A}^*(\mathcal{I},p)=(A^*_1(\mathcal{I},p),\ldots, A^*_n(\mathcal{I},p))$ and $\mathcal{A}^*(\mathcal{J}, p)= \left( A^*_{k+1}(\mathcal{J},p),\ldots, A^*_n(\mathcal{J},p) \right)$ denote the $p$-mean optimal allocations of instances $\mathcal{I}$ and $\mathcal{J}$, respectively.

So far, we have established two results \\
\noindent
(i) Lemma~\ref{Induction_argument}: The allocation $(\{g_1\},\ldots ,\{g_k\}, A^*_{k+1}(\mathcal{J},p), \ldots, A^*_{n}(\mathcal{J},p) )$ achieves welfare comparable to the optimal $p$-mean welfare (i.e., comparable to ${\rm M}_p \left(\mathcal{A}^*(\mathcal{I},p) \right)$), for $p \in (-\infty, 0.4)$. \\
\noindent
(ii) Lemma~\ref{Low_valued}: For the instance $\mathcal{J}$ and any $p\in (-\infty, 0.4]$, \textsc{AlgLow} computes an allocation $\mathcal{B}= (B_{k+1},\ldots,B_{n})$ such that, for all $j \in \{k+1, \ldots, n\}$,\footnote{For notational convenience, we index the bundles in allocation $\mathcal{B}$ from $k+1$ to $n$.}
\begin{align}\label{16}
v(B_j)\geq\frac{1}{40}{\rm M}_1(\mathcal{A}^{*}(\mathcal{J},1))\geq\frac{1}{40}{\rm M}_p(\mathcal{A}^{*}(\mathcal{J},p)) 
\end{align}

The allocation returned by \textsc{Alg} for input instance $\mathcal{I}$ is $\mathcal{A}=(\{g_1\},\ldots,\{g_k\}, B_{k+1}, \ldots ,B_n)$. We will prove Theorem~\ref{MainTheorem}, for $p \in (-\infty,0.4)$, by showing that Lemma~\ref{Induction_argument} and Lemma~\ref{Low_valued}, in conjunction, imply that the $p$-mean welfare of $\mathcal{A}$ is a constant times the optimal $p$-mean optimal; specifically, ${\rm M}_p(\mathcal{A})\geq \frac{1}{40}{\rm M}_p(\mathcal{A}^{*}(\mathcal{I},p))$, for $p\in(-\infty,0.4)$.

We split the proof of this inequality into three parts, depending on the range of the exponent parameter $p$.

\noindent
\textbf{Case 1:} $p \in (-\infty, 0)$. Since in this case $p$ is negative, to obtain the desired inequality, ${\rm M}_p(\mathcal{A})\geq \frac{1}{40} {\rm M}_p(\mathcal{A}^{*}(\mathcal{I},p))$, it suffices to show that
\begin{align*}
\sum \limits_{i=1}^k v \left( g_i\right)^p + \sum \limits_{j=k+1}^{n} v\left(B_j\right)^p\leq \frac{1}{(40) ^p}\sum \limits_{i=1}^n v(A^{*}_i(\mathcal{I},p))^p 
\end{align*}

Exponentiating both sides of equation (\ref{16}) by $p <0$ and summing over $j \in \{k+1, \ldots, n\}$ lead to 
\begin{align}
\sum \limits _{j=k+1}^nv(B_j)^p\leq\frac{1}{(40) ^p}\sum \limits _{j=k+1}^{n}v(A^{*}_j(\mathcal{J},p))^p 
\end{align}

We add $\sum \limits_{i=1}^k v \left( g_i\right)^p$ to both sides of the previous equation and apply Lemma~\ref{Induction_argument} (with $p \in (-\infty, 0)$)  to obtain the desired inequality 
\begin{align*}
\sum \limits_{i=1}^k v \left( g_i\right)^p + \sum \limits _{j=k+1}^n v(B_i)^p \leq \frac{1}{\left( 40 \right)^p}  \left( (40)^p\sum \limits _{i=1}^k v(g_i)^p + \sum \limits _{j=k+1}^{n} v(A^{*}_j(\mathcal{J},p))^p \right)
\leq \frac{1}{40 ^p}\left( \sum \limits _{i=1}^n v(A^{*}_i(\mathcal{I},p))^p\right)\hspace{3.6cm}
\end{align*}

\noindent {\bf Case 2:} $p  \in (0,0.4)$. 
Note that in this case $p>0$. Hence, to obtain the desired inequality, ${\rm M}_p(\mathcal{A})\geq \frac{1}{40} {\rm M}_p(\mathcal{A}^{*}(\mathcal{I},p))$, it suffices to show that
\begin{align*}
\sum \limits_{i=1}^k v \left( g_i\right)^p + \sum \limits_{j=k+1}^{n} v\left(B_j\right)^p\geq \frac{1}{(40) ^p}\sum \limits_{i=1}^n v(A^{*}_i(\mathcal{I},p))^p 
\end{align*}

Exponentiating both sides of equation (\ref{16}) by $p >0$ and summing over $j \in \{k+1, \ldots, n\}$ lead to 

\begin{align*}
\sum \limits _{j=k+1}^nv(B_i)^p & \geq \frac{1}{(40) ^p}\sum \limits _{j=k+1}^{n}v(A^{*}_j(\mathcal{J},p))^p.
\end{align*}

We add $\sum \limits_{i=1}^k v \left( g_i\right)^p$ to both sides of the previous equation and apply Lemma~\ref{Induction_argument} (with $p \in (0,0.4)$)  to obtain the desired inequality 
\begin{align*}
\sum \limits_{i=1}^k v \left( g_i\right)^p + \sum \limits _{j=k+1}^n v(B_j)^p &\geq  \frac{1}{\left( 40\right)^p}  \left( (40)^p \sum \limits _{i=1}^k v(g_i)^p + \sum \limits _{j=k+1}^{n} v(A^{*}_j(\mathcal{J},p))^p \right)
\geq \frac{1}{(40) ^p}\left( \sum \limits _{i=1}^n v(A^{*}_i(\mathcal{I},p))^p\right)\hspace{3.6cm}
\end{align*}

\noindent {\bf Case 3:} $p=0$. In this case, ${\rm M}_0(\mathcal{A}) = \left( \prod \limits _{i=1}^kv(g_i) \prod \limits _{j=k+1}^{n} v(B_j)\right)^\frac{1}{n}.$
Multiplying inequality (\ref{16}) over all $j \in \{k+1,\ldots, n\}$ gives us 
\begin{align*}
\prod \limits _{j=k+1}^{n} v(B_i)  \geq \frac{1}{(40)^{n-k}} \left( {\rm M}_0 \left(\mathcal{A}^{*}(\mathcal{J}, 0) \right) \right)^{n-k} & =  \frac{1}{(40)^{n-k}}\prod \limits _{j=k+1}^{n} v(A^{*}_j(\mathcal{J},0))
\end{align*}

Next we multiply both sides of this inequality by  $\prod_{i=1}^k v\left(g_i\right)$ and obtain 
\begin{align*} 
\prod \limits _{i=1}^k v(g_i) \prod \limits _{j=k+1}^n v(B_i)\geq \frac{1}{(40) ^{n-k}}\prod \limits_{i=1}^k v(g_i) &\prod \limits_{j=k+1}^{n} v(A^*_j (\mathcal{J},0) ) \geq \frac{1}{(40) ^n}({\rm M}_0(\mathcal{A}^{*}(\mathcal{I}, 0)))^n   \tag{via Lemma~\ref{Induction_argument} with $p=0$}
\end{align*}
Taking the $n${th} root on both sides, we obtain the desired result ${\rm M}_0(\mathcal{A}) \geq \frac{1}{40} {\rm M}_0(\mathcal{A}^{*}(\mathcal{I},0))$. \\

Theorem \ref{MainTheorem} now stands proved for $p \in (-\infty,0.4)$.

We next complete the main result by addressing the range $p \in [0.4, 1]$. It is relevant to note that the arguments used to establish the approximation guarantee in this range are different from the techniques used so far. 

\section{Proof of Theorem \ref{MainTheorem} for $p \in [0.4,1]$} \label{subsection:p-half-one}



For instance $\mathcal{I}$, let \textsc{Alg} assign the $k$ highest-valued goods as singletons in its while-loop. Specifically, write $\widehat{G}=\left\{g_1,\ldots, g_k\right\}$ to denote the $k$ goods that are assigned in {while}-loop of \textsc{Alg}. Instance $\mathcal{J} = \langle [m] \setminus \{g_1, \ldots, g_k\}, [n] \setminus [k], v \rangle$ is passed as input to \textsc{AlgLow}, which returns allocation $\mathcal{B}=(B_{k+1},\ldots,B_n)$. Recall that $\mathcal{B}$  satisfies Lemma~\ref{Low_valued}. Finally, let $\mathcal{A}=\Pallo$ denote the allocation returned by \textsc{Alg}.
Also, as before, let $\mathcal{A}^*(\mathcal{I},p) = (A^*_1(\mathcal{I},p),\ldots, A^*_n(\mathcal{I},p))$ denote the $p$-optimal allocation of $\mathcal{I}$.

We will prove the following bound for $p \in  [0.4,1]$ and, hence, establish the stated approximation guarantee 
\begin{align}
\left( \frac{1}{n}\sum\limits_{i=1}^k v(g_i)^p +  \frac{1}{n}\sum\limits_{j=k+1}^{n}v(B_j)^p \right)^{\frac{1}{p}} & \geq \frac{1}{40} \left(\frac{1}{n} \sum\limits_{i=1}^{n}v(A^{*}_i(\mathcal{I},p))^p \right)^{\frac{1}{p}}
\end{align}

Write $\mathcal{A}^*(\mathcal{I},p)\setminus \widehat{G}$ to denote the allocation (specifically, an $n$-partition) obtained by removing the goods $\widehat{G} =\{g_1,\ldots,g_k\}$ from the bundles in $\mathcal{A}^*(\mathcal{I},p)$, i.e., $ \mathcal{A}^*(\mathcal{I},p)\setminus \widehat{G} \coloneqq \left( A^*_i(\mathcal{I},p) \setminus \widehat{G}  \right)_{i=1}^n$. 
Subadditivity of $v$ ensures that, for all $i \in [n]$, the bundle $A^{*}_i(\mathcal{I},p)$ satisfies  
\begin{align*}
v(A^{*}_i(\mathcal{I},p))  & \leq v \left( A^{*}_i (\mathcal{I},p)  \setminus \widehat{G}\right)  + \sum \limits_{ g \in \widehat{G} \cap A^{*}_i(\mathcal{I},p)} v(g).  
\end{align*}

Since $p>0$, exponentiating the previous inequality by $p$ gives us   
\begin{align}
  \left(v(A^{*}_i(\mathcal{I},p)) \right)^p & \leq \left(v \left( A^{*}_i (\mathcal{I},p)  \setminus \widehat{G}\right)  + \sum \limits_{ g \in \widehat{G} \cap A^{*}_i(\mathcal{I},p)} v(g) \right)^p \nonumber \\
  & \leq \left(v(A^{*}_i(\mathcal{I},p)\setminus \widehat{G}) \right)^p + \sum \limits_{g\in \widehat{G}\cap A^{*}_i(\mathcal{I},p)} v(g)^p \label{ineq:p-exp}
\end{align}

The last inequality follows from the fact that $(x+y)^p \le x^p+y^p$, for all $p \in [0.4, 1]$ and $x, y \in \mathbb{R}_+$.

Averaging equation (\ref{ineq:p-exp}) over $i \in [n]$ leads to  
 \begin{align}\label{17}
  \frac{1}{n} \summi{1}{k}{g}^p + \frac{1}{n}\sum\limits_{i=1}^{n}{v(A^{*}_i(\mathcal{I},p)\setminus \widehat{G}))^p} 
  \geq \frac{1}{n}\sum\limits_{i=1}^{n}{v(A^{*}_i(\mathcal{I},p))^p}
 \end{align}

To show that the $k$ goods in $\widehat{G}$ (which are allocated as singletons) substantially contribute towards $p$-mean welfare of the computed allocation $\mathcal{A}$, we will next establish the following lower bound for all $1\leq t \leq k$
\begin{align*}
v(g_t)^p & \geq \frac{1}{(7.06)^p}\left( \frac{1}{n}\sum\limits_{j=1}^n v(A^{*}_j(\mathcal{I},p)\setminus \widehat{G}) ^p \right)
\end{align*}

Recall that $\mathcal{I}^t$ denotes the fair-division instance $\I{[m]\setminus \{ g_1,\ldots, g_t\}}{[n]\setminus \{1,\ldots t\}}$, for $1\leq t\leq k$. Write $\mathcal{A}^*(\mathcal{I}^t,p)=(A^*_{t+1} (\mathcal{I}^t,p),\ldots,A^*_{n}(\mathcal{I}^t,p))$ to denote a $p$-optimal allocation of instance $\mathcal{I}^t$. 

The selection criterion of the while-loop in \textsc{Alg} and the fact that Feige's algorithm achieves an approximation ratio of $2$ ensure 
\begin{align}
  v(g_t) &\geq \frac{1}{3.53}{\rm F}(\mathcal{I}^{t-1}) \geq \frac{1}{7.06}\left ( \frac{1}{n-t+1} \sum\limits_{j=t}^{n}v(A^{*}_j(\mathcal{I}^{t-1},1))  \right) \label{ineq:bound-g-t-low}
\end{align}

Index the $n$ bundles in allocation $\mathcal{A}^*(\mathcal{I},p)\setminus \widehat{G}$ in non-increasing order of value $  v(A^*_1(\mathcal{I},p)\setminus \widehat{G}) \geq  v(A^*_2(\mathcal{I},p)\setminus \widehat{G}) \geq \ldots \geq v(A^*_n(\mathcal{I},p)\setminus \widehat{G})$ and note that the arithmetic mean of the values of the first $n-t+1$ bundles is at least as large as the overall arithmetic mean
\begin{align}
\frac{1}{n-t+1}\sum \limits _{j=1}^{n-t+1}v(A^*_j(\mathcal{I},p)\setminus \widehat{G}) \geq \frac{1}{n}\sum \limits _{j=1}^{n}v(A^*_j(\mathcal{I},p)\setminus \widehat{G}) \label{ineq:top-mean}
\end{align}

Given that allocation $\mathcal{A}^*(\mathcal{I},p)\setminus \widehat{G}$ constitutes an $n$-partition of the set of goods $[m] \setminus \widehat{G}$ and allocation $\mathcal{A}^*(\mathcal{I}^{t-1}, 1) = (A^*_{t} (\mathcal{I}^{t-1},1),\ldots, A^*_{n}(\mathcal{I}^{t-1},1))$ is an $(n-t+1)$-partition of $[m]\setminus \{g_1, \ldots, g_{t-1}\} \supseteq [m] \setminus \widehat{G}$, we have the following containment  $\bigcup \limits _{j=1}^{n-t+1}\left(A_j^*(\mathcal{I},p)\setminus \widehat{G}\right ) \subseteq \bigcup \limits _{j=t}^{n} \left(\ A^*_j(\mathcal{I}^{t-1},1)\right)$. Furthermore, by definition, allocation $\mathcal{A}^*(\mathcal{I}^{t-1}, 1)$ achieves the maximum possible average social welfare among all $(n-t+1)$ partitions of $[m] \setminus \{g_1, \ldots, g_t\}$. Therefore,  we have 
\begin{align}
 \frac{1}{n-t+1}\sum\limits_{j=t}^{n} v(A^{*}_j(\mathcal{I}^{t-1},1)) & \geq   \frac{1}{n-t+1}\sum\limits_{j=1}^{n-t+1} v(A^{*}_j(\mathcal{I},p)\setminus \widehat{G}) \label{ineq:opt-sub-problem}
\end{align}

Equations (\ref{ineq:bound-g-t-low}) and (\ref{ineq:opt-sub-problem}) lead to 
 \begin{align*}
  v(g_t) & \geq \frac{1}{7.06}\left( \frac{1}{n-t+1}\sum\limits_{j=1}^{n-t+1} v(A^{*}_j(\mathcal{I},p)\setminus \widehat{G})  \right)\\
  &\geq \frac{1}{7.06}\left( \frac{1}{n}\sum\limits_{j=1}^n v(A^{*}_j(\mathcal{I},p)\setminus \widehat{G})  \right) \tag{using inequality (\ref{ineq:top-mean})} \\ 
  &\geq \frac{1}{7.06}\left( \frac{1}{n}\sum\limits_{j=1}^n v(A^{*}_j(\mathcal{I},p)\setminus \widehat{G})^p  \right) ^{\frac{1}{p}} \tag{via the generalized mean inequality}
\end{align*}
Exponentiating both sides of the previous inequality by $p$ gives us the desired lower bound
\begin{align}\label{19}
 v(g_t)^p \ge \frac{1}{(7.06)^p}\left( \frac{1}{n}\sum\limits_{j=1}^n v(A^{*}_j(\mathcal{I},p)\setminus \widehat{G}) ^p \right)
 \end{align} 
 
Equation (\ref{19}) enables us to bound the $p$-welfare contribution of the goods $\widehat{G}=\{g_1, \ldots, g_k\}$ assigned as singletons 
\begin{align}\label{20}
 \frac{2}{n} \sum\limits_{i=1}^kv(g_i)^p \geq  \frac{1}{n} \sum\limits_{i=1}^kv(g_i)^p
 + \frac{k}{n} \ \frac{1}{(7.06)^p} \left(\frac{1}{n} \sum \limits_{i=1}^n v ( A^{*}_i (\mathcal{I},p)\setminus \widehat{G})^p\right)
\end{align}

Recall that \textsc{AlgLow}---with input instance $\mathcal{J}=\mathcal{I}^k$---returns allocation $\mathcal{B}=(B_{k+1}, \ldots B_n)$. Next we lower bound the values of these bundles $B_{k+1}, \ldots, B_n$. 
\begin{align*}
v(B_j) & \geq  \frac{1}{40} {\rm M}_1(\mathcal{A}^{*}(\mathcal{J},1)) \tag{via Lemma~\ref{Low_valued}} \\
& = \frac{1}{40 }\left( \frac{1}{n-k} \sum\limits_{i=k+1}^{n} v(A^{*}_i(\mathcal{J},1))  \right) \tag{by defintion, $\mathcal{A}^{*}(\mathcal{J},1) = \left( \ A^{*}_i(\mathcal{J},1) \ \right)_{i=k+1}^n $} \\
& \geq \frac{1}{40 }\left(  \frac{1}{n-k}\sum\limits_{j=1}^{n-k} v(A^{*}_j(\mathcal{I},p)\setminus \widehat{G})  \right)\tag{instantiating inequality (\ref{ineq:opt-sub-problem}) for $\mathcal{J} = \mathcal{I}^k$} \\
&\geq \frac{1}{40}\left( \frac{1}{n}\sum\limits_{j=1}^n v(A^{*}_j(\mathcal{I},p)\setminus \widehat{G})  \right) \tag{using inequality (\ref{ineq:top-mean})} \\ 
  &\geq \frac{1}{40}\left( \frac{1}{n}\sum\limits_{j=1}^n v(A^{*}_j(\mathcal{I},p)\setminus \widehat{G})^p  \right) ^{\frac{1}{p}} \tag{via the generalized mean inequality}
\end{align*}

Exponentiating by $p$ and summing over all $j \in \{k+1,\ldots n\}$, we have
\begin{align} \label{21}
 \frac{1}{n}\sum\limits_{j=k+1}^{n}v(B_j)^p \ge \frac{n-k}{n} \frac{1}{(40) ^p}\left(\frac{1}{n}
 \sum\limits_{i=1}^{n}v(A^{*}_i(\mathcal{I},p)\setminus \widehat{G})^p \right)
\end{align}

Combining inequalities (\ref{20}) and (\ref{21}) gives us
\begin{align*}
 \frac{1}{n}\sum\limits_{i=1}^k v(g_i)^p +  \frac{1}{n}\sum\limits_{j=k+1}^{n}v(B_j)^p & \geq
   \frac{1}{2n}\ \sum\limits_{i=1}^k v(g_i)^p +
  \frac{k}{2n} \cdot \frac{1}{(7.06)^p} \left (\frac{1}{n}\sum\limits_{j=1}^{n}v ( A^{*}_i (\mathcal{I},p)\setminus \widehat{G})^p\right) \\ & \ \ \ \ \   +
  \frac{n-k}{n}\frac{1}{(40) ^p}\left(\frac{1}{n}\sum\limits_{i=1}^{n}v(A^{*}_i(\mathcal{I},p)\setminus \widehat{G})^p\right) 
  \end{align*}

Note that $2 \times (7.06)^p \leq (40) ^p$ for all $p  \in [0.4,1]$, hence, the previous inequality simplifies to 
  \begin{align*}
    \frac{1}{n}\sum\limits_{i=1}^k v(g_i)^p +  \frac{1}{n}\sum\limits_{j=k+1}^{n}v(B_j)^p & \geq
    \frac{1}{2n}\sum\limits_{i=1}^{k}v(g_i)^p +
  \frac{1}{(40) ^p}\left(\frac{1}{n}\sum\limits_{i=1}^{n}v(A^{*}_i(\mathcal{I},p)\setminus \widehat{G})^p\right) \\
  & \geq \frac{1}{(40) ^p} \left(\frac{1}{n} \sum\limits_{i=1}^{k}v(g_i)^p +
 \frac{1}{n} \sum\limits_{i=1}^{n}v(A^{*}_i(\mathcal{I},p)\setminus \widehat{G})^p \right) \tag{since $(40)^p > 2$ for $p \in [0.4,1]$} \\
 & \geq \frac{1}{(40) ^p}\left(\frac{1}{n} \sum\limits_{i=1}^{n}v(A^{*}_i(\mathcal{I},p))^p \right) \tag{using inequality (\ref{17})}
  \end{align*}

Taking the $p${th} root (with $p>0$) on both sides of the last inequality gives us the desired result for the computed allocation $\mathcal{A} = \Pallo$
\begin{align*}
{\rm M}_p(\mathcal{A})\ge \frac{1}{40}{\rm M}_p(\mathcal{A}^{*}(\mathcal{I},p)).
\end{align*}

This completes the proof of Theorem \ref{MainTheorem} for all $p \in (-\infty,1]$.

\section{Conclusion and Future Work}
This work studies the problem of allocating indivisible goods among agents that share a common subadditive valuation. We show that, for such settings, one can always (and in polynomial-time) find a single allocation that simultaneously approximates a range of generalized-mean welfares, to within a constant factor of the optimal. 


For ease of presentation, we focussed on the case in which the agents' valuations are exactly identical. Nonetheless, it can be shown that the developed results are somewhat robust: if, say, the agents' valuations are point-wise and multiplicatively close to each other, then again one can obtain meaningful approximation guarantees. Here, an interesting direction of future work is to address settings in which we have a fixed number of distinct valuation functions across all the agents. A nontrivial improvement on the developed approximation guarantee will also be interesting.

\section*{Acknowledgements}
Siddharth Barman gratefully acknowledges the support of a Ramanujan Fellowship (SERB - {SB/S2/RJN-128/2015}) and a Pratiksha Trust Young Investigator Award.

\bibliographystyle{alpha}
\bibliography{ref}

\appendix

\section{Proof of Numeric Inequality from Lemma \ref{Good_Transfer}} \label{app 3}
This section establishes the numeric inequalities used in Sections \ref{subsection:p-infty-zero-good-transfer} and \ref{subsection:p-zero-half-good-transfer}. Specifically, for $p\in (-\infty,0)$
\begin{align}
 \left(  \frac{1}{2} - \frac{1}{40}\right )^p + \left ( \frac{1}{2}\right )^p &\leq 1 + \left ( \frac{2}{11.33} \right )^p \label{ineq_neg}
\end{align}
Also, for $p\in (0,0.4)$, we have 
\begin{align}
\left(  \frac{1}{2} - \frac{1}{40}\right )^p + \left ( \frac{1}{2}\right )^p &\geq 1 + \left ( \frac{2}{11.33} \right )^p \label{ineq_pos} 
\end{align}

Write $a$ to denote $\left(  \frac{1}{2} - \frac{1}{40}\right ) $, $b$ to denote $\left ( \frac{1}{2}\right )$ and $c$ to denote $\left ( \frac{2}{11.33} \right ).$ Consider the following differentiable function $f(p)\coloneqq a^p+b^p-1-c^p$. In order to obtain bounds (\ref{ineq_neg}) and (\ref{ineq_pos}), it suffices to prove that $f(p)>0$ for all $p\in (0,0.4)$ and $f(p)\leq 0$ for all $p\in(-\infty,0)$. We first show a useful property of the function $f(p)$ that will help us in deriving these inequalities.
\begin{center}
    \includegraphics[scale=0.3]{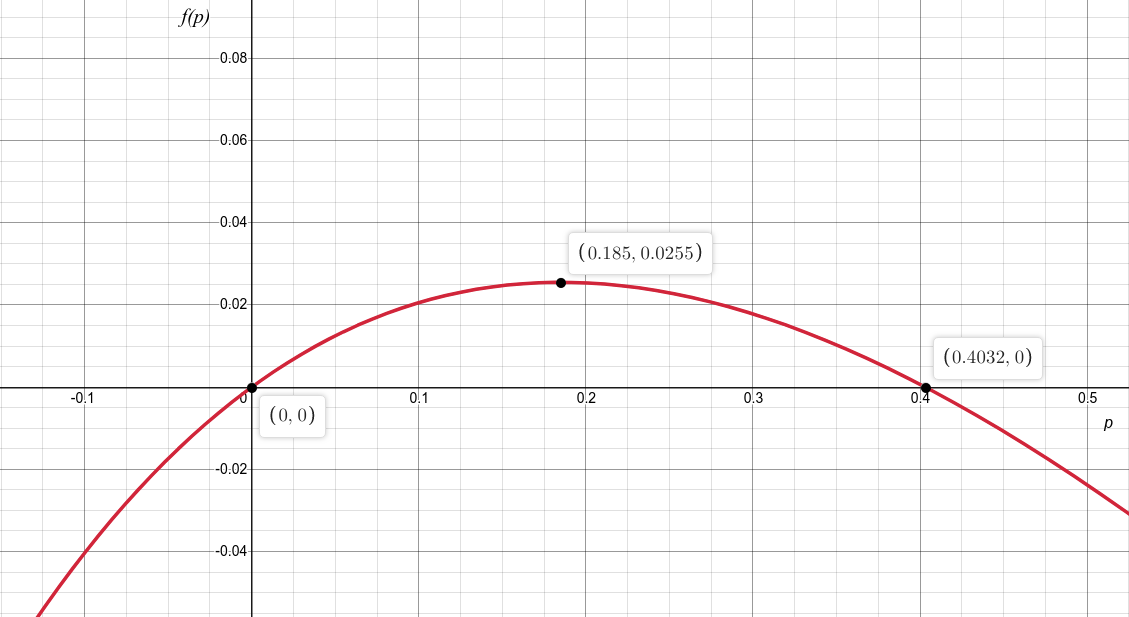}
    \captionof{figure}{Plot of the function  $f(p)$ }
    \label{fig:sample_figure}
\end{center}

\begin{Claim}\label{extreme_points_are_maxima}
Every extreme point of the function $f(p)=a^p+b^p-1-c^p$ is a local maximum; here, $a=\left(  \frac{1}{2} - \frac{1}{40}\right ) $, $b=\left ( \frac{1}{2}\right )$ and $c=\left ( \frac{2}{11.33} \right ).$
\end{Claim}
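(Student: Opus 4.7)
The plan is to show that at every critical point $p^\ast$ of $f(p) = a^p + b^p - 1 - c^p$ the second derivative $f''(p^\ast)$ is strictly negative, which immediately gives the claim. The key numerical facts I will use are that $a = 19/40$, $b = 1/2$, and $c = 2/11.33$ all lie in $(0,1)$ (so $\ln a, \ln b, \ln c < 0$) and, crucially, that $a > c$ and $b > c$ (so $\ln a > \ln c$ and $\ln b > \ln c$).

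First I would differentiate twice:
\begin{align*}
f'(p) &= a^p \ln a + b^p \ln b - c^p \ln c, \\
f''(p) &= a^p (\ln a)^2 + b^p (\ln b)^2 - c^p (\ln c)^2.
\end{align*}
At any extreme point $p^\ast$, the condition $f'(p^\ast)=0$ yields $c^{p^\ast} \ln c = a^{p^\ast} \ln a + b^{p^\ast} \ln b$. Multiplying both sides by $\ln c$ gives
\[
c^{p^\ast}(\ln c)^2 = \bigl( a^{p^\ast} \ln a + b^{p^\ast} \ln b \bigr) \ln c.
\]
Substituting into the expression for $f''(p^\ast)$ I obtain, after rearranging,
\[
f''(p^\ast) = a^{p^\ast} \ln a \,(\ln a - \ln c) + b^{p^\ast} \ln b \,(\ln b - \ln c).
\]

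Now I would argue that each of the two summands above is strictly negative. Since $a > c$, we have $\ln a - \ln c > 0$, while $\ln a < 0$ (because $a \in (0,1)$); hence $a^{p^\ast}\ln a\,(\ln a - \ln c) < 0$. The same reasoning applied to $b$ in place of $a$ gives $b^{p^\ast}\ln b\,(\ln b - \ln c) < 0$. Adding these inequalities yields $f''(p^\ast) < 0$, so $p^\ast$ is a local maximum, which is what we wanted to prove.

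I do not expect any real obstacle here: the argument reduces to a clean algebraic manipulation of the first-order condition, and the sign analysis is immediate once one notices the ordering $c < a, b < 1$. The only thing to double-check is the numerical inequality $c < \min(a,b)$, i.e.\ $2/11.33 < 19/40$, which holds since $2/11.33 < 0.18 < 0.475$.
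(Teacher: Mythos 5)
Your proof is correct and follows essentially the same approach as the paper: you differentiate twice, use the first-order condition at a critical point to eliminate the $c^{p^\ast}(\ln c)^2$ term, and deduce $f''(p^\ast)<0$ from the sign facts $\ln a,\ln b<0$ and $\ln a-\ln c,\ \ln b-\ln c>0$. The only difference is the use of $\ln$ versus the paper's $\log$, which is immaterial.
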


\begin{proof}
We consider the first and second derivatives of the smooth function $f$:
\begin{align}
f'(p)= a^p\log  (a) + b^p\log  (b) - c^p\log  (c) \label{$f'$}\\  \text{and } f''(p) = a^p(\log (a))^2 + b^p(\log  (b))^2 - c^p(\log  (c))^2  \label{$f''$}
\end{align} 
Let $\overline{p}$ be any point that satisfies $f'(\overline{p})=0. $ That is, $c^{\overline{p}}\log (c)= a^{\overline{p}}\log (a) + b^{\overline{p}}\log  (b) $. Instantiating equation (\ref{$f''$}) with $p=\overline{p}$ and substituting the previous expression for $c^{\overline{p}}\log (c)$, we get 
\begin{center}
$f''(\overline{p})=a^{\overline{p}}\log  (a)(\log  (a)-\log  (c))+b^{\overline{p}}\log  (b)(\log (b)-\log (c))$
\end{center}
In the above equation, note that the terms $(\log  (a)-\log  (c))$ and $(\log  (b)-\log  (c))$ are positive, since $\frac{a}{c} >1$ and $\frac{b}{c} >1$, while the terms $\log (a)$ and $\log (b)$ are negative since $a,b<1.$ Therefore, $f''(\overline{p})<0$, which means $\overline{p}$ is a local maximum. We have now shown that every extreme point of $f$ is a local maximum.

\end{proof}
Note that $f(0.4)>0$ and $f(0.41)<0$, hence, the intermediate value theorem implies that there exists a point $r \in (0.4, 0.41)$ such that $f(r) = 0$. We will show that $f$ is nonnegative over the interval $[0,r]$ and, hence, $f(p) \geq 0$ for all $p \in (0, 0.4)$. 

Given that $f(0)=0$ and $f(r) = 0$, we can apply the mean value theorem to $f$ in the range $[0,r]$ to conclude that there is a point $p_0 \in [0,r]$ satisfying $f'(p_0)=0$. Using Claim \ref{extreme_points_are_maxima}, we will next show that $f(p)$ has a unique extreme point, which is a local maximum, in the range $[0,r]$. 

\begin{Claim}\label{unique_maximum}
Let $p^*  \in [0,r]$ be a point that satisfies $f'(p^*)=0$, then it is the unique extreme point of $f(\cdot)$ and is a local maximum.
\end{Claim}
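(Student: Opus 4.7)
My plan is to handle the two parts of the claim separately. The local-maximum assertion is immediate from Claim \ref{extreme_points_are_maxima}; moreover, that earlier claim was proved via the strict inequality $f''(\overline{p})<0$ at any critical point $\overline{p}$, so every such extreme point is in fact a \emph{strict} local maximum. I plan to exploit this strictness to rule out a second extreme point inside $[0,r]$.

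For uniqueness I would argue by contradiction. Suppose $p_1<p_2$ are two distinct points in $[0,r]$ with $f'(p_1)=f'(p_2)=0$. By the strict second-order condition just noted, each $p_i$ is a strict local maximum of $f$, so there exist a one-sided neighborhood to the right of $p_1$ and to the left of $p_2$ on which $f$ takes values strictly less than $f(p_1)$ and $f(p_2)$, respectively. Now let $p_3$ minimize the continuous function $f$ on the compact interval $[p_1,p_2]$. The strict local-maximum property at the endpoints forces $p_3\neq p_1$ and $p_3\neq p_2$ (the one-sided neighborhoods just described would otherwise contradict $p_3$ being a minimizer), so $p_3\in(p_1,p_2)$, and being an interior minimizer of a differentiable function, $f'(p_3)=0$.

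Applying Claim \ref{extreme_points_are_maxima} to $p_3$ now yields the contradiction: it asserts that $p_3$ is itself a strict local maximum, so $f(p)<f(p_3)$ for every $p$ in some punctured neighborhood of $p_3$. But $p_3$ was chosen as a minimizer of $f$ on $[p_1,p_2]$, so $f(p)\geq f(p_3)$ for every nearby $p$ in $[p_1,p_2]$. These two statements are incompatible, so no second critical point can exist in $[0,r]$, and thus $p^*$ is the unique extreme point of $f$ on that interval.

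The main obstacle, and the only place where care is required, is ensuring that the minimizer $p_3$ is strictly interior to $[p_1,p_2]$, since the argument relies on being able to deduce $f'(p_3)=0$. This is precisely what the \emph{strict} (rather than merely non-strict) local-maximum conclusion of Claim \ref{extreme_points_are_maxima} buys us; after that, the contradiction is immediate. No additional computation with the specific constants $a$, $b$, $c$ is needed for this claim, since all of the analytic work has already been absorbed into Claim \ref{extreme_points_are_maxima}.
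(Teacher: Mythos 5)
Your proof is correct and follows essentially the same route as the paper's: both assume a second critical point, take a minimizer of $f$ on the interval between the two critical points, use the (strict) local-maximum property at the endpoints to force that minimizer into the interior so that $f'$ vanishes there, and then contradict Claim~\ref{extreme_points_are_maxima}. One small caveat: you only conclude uniqueness of the extreme point within $[0,r]$, whereas the claim as stated (and as invoked later, in the $p\in(-1,0)$ case, where the offending second critical point lies in $(-1,0)$) requires uniqueness over all of $\mathbb{R}$; since your argument never actually uses that $p_1,p_2\in[0,r]$, it applies verbatim to any pair of critical points, so this is a matter of stating the conclusion at full strength rather than a gap in the reasoning.
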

\begin{proof}
Since $f'(p^*)=0$, $p^*$ is an extreme point. Additionally, from Claim \ref{extreme_points_are_maxima}, $p^*$ is a local maximum. 

Assume, towards a contradiction, that there is another local maximum $p_1 \in \mathbb{R}$ and, without loss of generality, assume that $p_1>p^*.$ We will show that this implies the existence of a local minimum between $p^*$ and $p_1$, hence contradicting Claim \ref{extreme_points_are_maxima}.

Consider the point $p_{min}$ defined as $p_{min}=\text{arginf} \left\{ \ f(p) \mid p \in [p^*, p_1] \right\}$. Note that $p_{min}\neq p^*,p_1$ because $p^*$ and $p_1$ are local maxima, which means their function values are greater than those of the points in an $\varepsilon$-neighborhood around them. Therefore, applying Fermat's theorem for stationary points, we get $f'(p_{min})=0$. Since $p_{min}$ is also the infimum of $p\in[p^*,p_1],$ it is a local minimum  between $p^*$ and $p_1.$ Therefore, by way of contradiction, we obtain the stated claim. 
\end{proof}
\noindent \textit{Proof of inequality (\ref{ineq_pos}):} We now prove that $f(p)\geq0$ for $p \in [0, 0.4].$ 
Assume, for the sake of contradiction, that there exists is a point $q \in [0, 0.4]$ such that $f(q)<0$.

The facts that $f(0) = 0$, $f'(0) >0$ and $f(q)<0$, along with the mean value theorem, imply that there exists a point $\widehat{p} \in [0, q)$ such that $f'(\widehat{p})=0$. Additionally, applying mean value theorem with inequalities $f(0.4)>0$, $f'(0.4)<0$, and $f(q)<0$, we get a different point $\overline{p}\in (q, 0.4]$ such that $f'(\overline{p})=0$. Existence of two distinct extreme points contradicts Claim \ref{unique_maximum} and, hence, establishes that $f(p)\geq 0$ for all $p \in [0,0.4]$. \\

\noindent \textit{Proof of inequality (\ref{ineq_neg}):} We split the proof of this inequality depending on the range of $p$:\\

\noindent
Case 1: $p\in (-\infty,-1]$. Note that $a>\frac{1}{2.2}$, $b=\frac{1}{2}$, $c<\frac{1}{5}$, and $p$ is negative.
Substituting these bounds for $a,b$ and $c$ in the expression for $f(p)$, we get $f(p)<\left(\frac{1}{2.2}\right)^p+\left(\frac{1}{2}\right)^p-1-\left(\frac{1}{5}\right)^p.$  Equivalently, we have $f(p)<(2.2)^{|p|}+2^{|p|}-1-5^{|p|}.$  Since $5^{|p|}\geq(2.2)^{|p|}+2^{|p|}$ for $p\leq-1$, we conclude that $f(p)<0$ for $p\in(-\infty,-1].$\\
\noindent
Case 2: $p\in (-1,0)$. Assume, for the sake of contradiction, that there is a $\widetilde{p} \in(-1,0)$ such that $f(\widetilde{p})>0$.
Note that $f(-1) <0$. Hence, by applying intermediate value theorem to $-1$ and $\widetilde{p}$, we have a point $\widehat{p} \in (-1, \widetilde{p})$ such that $f(\widehat{p})=0$. Additionally, mean value theorem with $0$ and $\widehat{p}$ implies that there exists a point $\overline{p}\in [\widehat{p},0)$ such that $f'(\overline{p})=0$. Since this contradicts Claim \ref{unique_maximum}, we have $f(p) \leq 0$ for all $p\in(-1,0)$.

These two cases imply that inequality (\ref{ineq_neg}) holds for all $p\in (-\infty,0).$

\section{APX-Hardness of Maximizing $p$-Mean Welfare} \label{APX_Hardness}
In this section, we prove that the problem of computing a $p$-optimal allocation is {\rm APX}-hard, for all $p\in (-\infty,1]$, in the demand oracle model. 
That is, we show that there exists a constant $c>1$ such that it is {\rm NP}-hard to approximate the optimal $p$-mean welfare within a factor of $c$, even if we are given access to demand queries.

\begin{theorem}\label{APX_hardness_theorem}
Given a fair division instance $\mathcal{I}=\I{[m]}{[n]}$, wherein the agents have identical, subadditive valuations, the $p$-mean welfare maximization problem is {\rm APX}-Hard for all $p\in (-\infty,1]$, in the demand oracle model.
 \end{theorem}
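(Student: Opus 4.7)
The plan is to lift the APX-hardness of social-welfare maximization under identical subadditive valuations (the $p=1$ case, established by Dobzinski~\cite{dobzinski2005approximation}) to every $p \in (-\infty, 1]$. The observation is that Proposition~\ref{Monotonicity} gives $M_p \leq M_1$ on any allocation, so the NO side of a social-welfare gap instance transfers for free; the only work is to arrange the YES side so that it also produces an allocation whose bundle values are \emph{balanced}, since $M_p$ is coordinate-wise monotone and hence a balanced allocation keeps $M_p$ large.

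Concretely, start from a constant $c > 1$ and a family of identical-subadditive gap instances $\mathcal{I} = \langle [m], [n], v\rangle$ for which distinguishing optimal social welfare $\geq A$ (YES) from $\leq A/c$ (NO) is NP-hard in the demand-oracle model. I build an augmented instance $\mathcal{I}' = \langle [m] \times [n], [n], v'\rangle$ by taking $n$ disjoint copies of the goods and defining $v'(S) \coloneqq \sum_{t=1}^n v(S_t)$, where $S_t$ is the projection of $S$ onto the $t$th copy. Each summand $S \mapsto v(S_t)$ is monotone, normalized, and subadditive, and these properties are closed under finite sums, so $v'$ inherits them. A demand query to $v'$ at prices $\{p_{(g,t)}\}_{g,t}$ decomposes into $n$ independent demand queries to $v$, one per copy, so $\mathcal{I}'$ is constructible in polynomial time with demand-oracle access.

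In a YES instance, fix any allocation $(A_1, \ldots, A_n)$ of $\mathcal{I}$ with $\sum_i v(A_i) \geq A$, and in $\mathcal{I}'$ assign to agent $i$ the copy-$t$ image of $A_{\sigma_t(i)}$, where $\sigma_t$ is the cyclic shift $\sigma_t(i) = ((i + t - 2) \bmod n) + 1$. Because $\sigma_1, \ldots, \sigma_n$ form a Latin square, every agent receives exactly one image of each $A_j$, so every agent's bundle has $v'$-value equal to $\sum_j v(A_j) \geq A$; by coordinate-wise monotonicity of $M_p$ this forces $M_p \geq A$ for all $p \in (-\infty, 1]$. In a NO instance, for any allocation $(B_1, \ldots, B_n)$ of $\mathcal{I}'$ the restrictions $\{(B_i)_t\}_i$ form an allocation of the $t$th copy of $\mathcal{I}$, so
\begin{align*}
\sum_{i=1}^n v'(B_i) \;=\; \sum_{t=1}^n \sum_{i=1}^n v((B_i)_t) \;\leq\; n \cdot (A/c),
\end{align*}
giving $M_1(\mathcal{B}) \leq A/c$ and hence $M_p(\mathcal{B}) \leq A/c$ via Proposition~\ref{Monotonicity}. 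Thus $\mathcal{I}'$ is a $p$-mean welfare gap instance of ratio at least $c$, uniformly in $p$, which yields the theorem.

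The main delicate point is extracting from~\cite{dobzinski2005approximation} a gap of the clean ``optimal social welfare $\geq A$ versus $\leq A/c$'' form used above in the demand-oracle model; this is essentially what an APX-hardness reduction produces, so it is a matter of careful citation rather than new combinatorial work. An analogous argument starting from the identical-submodular APX-hardness of~\cite{khot2008inapproximability} establishes the same conclusion in the value-oracle model, since the sum of submodular set functions is submodular and hence the reduction preserves submodularity as well.
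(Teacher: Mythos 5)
Your lifting argument is essentially sound, and it takes a genuinely different route from the paper. The paper gives a direct, self-contained reduction from Gap-3DM: it builds an explicit XOS valuation $v(S) = \max_i |S \cap E_i|$ over $3q$ vertex-goods and $q$ agents, and the point is that a perfect $3$-dimensional matching yields an allocation in which \emph{every} agent's bundle has value exactly $3$, so $M_p \geq 3$ for all $p$ immediately, while in the NO case at most $\alpha q$ bundles reach value $3$, forcing $M_1 \leq 2 + \alpha$, and Proposition~\ref{Monotonicity} drags $M_p$ down with it. Your approach abstracts the one thing that matters — the YES side must be \emph{balanced} — and achieves it generically: take the $n$-fold direct sum of the goods, observe that the direct sum of monotone normalized subadditive functions over disjoint ground sets is again monotone normalized subadditive and that demand queries factor coordinatewise, and round-robin a social-welfare witness via a Latin square so every agent's bundle value is the full welfare $\sum_j v(A_j)$. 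What your construction buys is modularity: it would convert \emph{any} social-welfare gap instance with identical subadditive valuations into a uniform $p$-mean gap instance with the same ratio, and it transparently preserves submodularity for the value-oracle variant. What the paper's construction buys is self-containedness: it does not depend on extracting a clean ``OPT $\geq A$ vs.\ OPT $\leq A/c$'' promise from~\cite{dobzinski2005approximation} in the demand-oracle model, which you correctly flag as the delicate citation step. The paper, in effect, does that extraction itself by building the gap instance from Gap-3DM from scratch, and as a byproduct the YES side comes out balanced without needing your copying-and-Latin-square step. Both proofs hinge on the same two facts — balancing the YES side to protect small $p$, and $M_p \leq M_1$ to collapse the NO side — so they are conceptually aligned, but yours is a reduction-from-a-reduction while the paper's is a reduction from scratch.
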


We prove this hardness result by developing a gap-preserving reduction from the Gap-3DM problem. An instance $\mathcal{C}$ of this problem consists of three disjoint sets $X$, $Y$, and $Z$, of cardinality $q$ each, along with a collection of $3$-uniform hyperedges $E \subset X \times Y \times Z$. 
The goal is to find a matching (i.e., a subset of pairwise disjoint hyperedges) $M \subset E$ of maximum cardinality. 

Formally, Gap-3DM is the gap version of $3$-dimensional matching and it entails distinguishing between the following types of instances: \\
(i) {\rm YES}-instance: There is a perfect matching (a matching of size $q$) in the given instance $\mathcal{C}$.\\
(ii) {\rm NO}-instance: All matchings in $\mathcal{C}$ are of size at most $\alpha q$, with $\alpha <1$.\\
The Gap-3DM problem is known to be {\rm NP}-hard, for an absolute constant $\alpha <1$~\cite{ostrovsky2014s}.\footnote{Note that the given instances in this  problem are promised to be either {\rm YES} or {\rm NO} instances.} 

Theorem \ref{APX_hardness_theorem} follows from a gap-preserving reduction. In particular, we will prove that in the {\rm YES} case (i.e., when 
the given instance of Gap-3DM has a perfect matching) the reduced instance of the welfare problem admits an allocation with $p$-mean welfare at least $3$. In the {\rm NO} case, the optimal $p$-mean welfare is less than $3 \ c (\alpha)$, where $c (\alpha) <1$ is a constant (that depends of $\alpha$). 

Therefore, given a $\frac{1}{c(\alpha)}$-approximation algorithm for $p$-mean welfare maximization, one could distinguish between the {\rm YES} and {\rm NO} instances of Gap-3DM. Hence, the hardness of Gap-3DM implies that it is {\rm NP}-hard to approximate the optimal $p$-mean welfare with a constant factor of $\nicefrac{1}{c(\alpha)}$. 

\subsection{Proof of Theorem \ref{APX_hardness_theorem}} 

Given an instance $\mathcal{C}$ of Gap-3DM with $3$-uniform hyperedges $E=\{E_1, \ldots, E_T\}$ over size-$q$ sets $X$, $Y$, and $Z$, we construct an instance $\mathcal{I}$ of $p$-mean welfare maximization with $q$ agents and $3q$ goods, one for each vertex $X \cup Y \cup Z$. All the agents share a common XOS valuation function $v$: define $v(S) \coloneqq \max \limits _{1\leq i\leq T} \left\{|S \cap E_i| \right\}$ for each subset of goods $S\subset X\cup Y\cup Z$.

Note that the value of any subset $S$ is upper bounded by 3, since each $E_i$ is a 3-uniform hyperedge. We now prove that this reduction is gap preserving. 

When the input instance $\mathcal{C}$ is a {\rm YES} instance, there is a matching of size $q$. We assign the three goods corresponding to each edge in the matching to a distinct agent. In this allocation each agent gets a bundle of value $3$. Therefore, the optimal $p$-mean welfare in this case is at least $3$. Conversely, when $\mathcal{C}$ is a {\rm NO} instance, every matching is of size at most $\alpha q$. We claim that in this case, the optimal $p$-mean welfare in the reduced instance $\mathcal{I}$ is upper bounded by $3 \ c(\alpha)$, where $c(\alpha) =\frac{2+\alpha}{3} <1$. 

Let $\mathcal{A}^*(\mathcal{I},p)=(A^*_1,\ldots, A^*_q)$ denote a $p$-optimal allocation in instance $\mathcal{I}$. Here, in any allocation, an agent's value for her bundle is either $0$, $1$, $2$, or $3$. We partition the bundle in the optimal allocation into two collections
$\mathcal{H} \coloneqq \{A^*_i \mid v(A^*_i)=3\}  \text{ and } \overline{\mathcal{H}} \coloneqq \{A^*_i \mid v(A^*_i)\leq 2\}$.
Every bundle $A^*_i$ in $\mathcal{H}$ contains at least one hyperedge $E_j$. Since the bundles in $\mathcal{H}$ are disjoint, the hyperedges contained in different bundles are nonintersecting. Hence, such hyperedges form a matching in $\mathcal{C}$ of size at least $|\mathcal{H}|.$ Recall that $\mathcal{C}$ is a {\rm NO} instance, i.e., every matching in $\mathcal{C}$ of size at most $\alpha q.$ Therefore,  there are at most $\alpha q$ bundles in $\mathcal{A}^*(\mathcal{I},p)$ of value $3$, $|\mathcal{H}| \leq \alpha q$. Using this inequality we can upper bound the optimal $p$-mean welfare in instance $\mathcal{I}$ as follows. 
\begin{Claim}
${\rm M}_p(\mathcal{A}^*(\mathcal{I},p))\leq 2+ \alpha = 3 \left( \frac{2+\alpha}{3}\right)  $
\end{Claim}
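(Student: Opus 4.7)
The plan is to combine the just-established bound $|\mathcal{H}| \leq \alpha q$ with the generalized mean inequality ($\mathrm{M}_p \leq \mathrm{M}_1$ for every $p \in (-\infty,1]$) to reduce the task to bounding the arithmetic mean, which is straightforward in this setup.

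First I would note that every bundle in the reduced instance has value in $\{0,1,2,3\}$, since the valuation $v(S) = \max_i |S \cap E_i|$ is the maximum intersection of $S$ with a $3$-uniform hyperedge. Recall the partition $\mathcal{A}^*(\mathcal{I},p) = \mathcal{H} \cup \overline{\mathcal{H}}$ used in the preceding paragraph, with $v(A^*_i)=3$ for $A^*_i\in\mathcal{H}$ and $v(A^*_i)\le 2$ for $A^*_i\in\overline{\mathcal{H}}$. Summing the values across the two collections gives
\[
\sum_{i=1}^{q} v(A^*_i) \; \leq \; 3|\mathcal{H}| + 2|\overline{\mathcal{H}}| \; \leq \; 3\alpha q + 2(1-\alpha) q \; = \; (2+\alpha)\, q,
\]
where the second inequality uses $|\mathcal{H}| \leq \alpha q$ (from the {\rm NO}-case bound on matching size in $\mathcal{C}$) and $|\overline{\mathcal{H}}| = q - |\mathcal{H}| \le q$. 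Dividing by $q$ yields $\mathrm{M}_1(\mathcal{A}^*(\mathcal{I},p)) \leq 2+\alpha$.

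To conclude, I would invoke the generalized mean inequality (exactly as in Proposition~\ref{Monotonicity}): for any allocation $\mathcal{A}$ and every $p \in (-\infty,1]$, we have $\mathrm{M}_p(\mathcal{A}) \leq \mathrm{M}_1(\mathcal{A})$. Applying this to $\mathcal{A}^*(\mathcal{I},p)$ gives $\mathrm{M}_p(\mathcal{A}^*(\mathcal{I},p)) \leq \mathrm{M}_1(\mathcal{A}^*(\mathcal{I},p)) \leq 2+\alpha$, which is the claimed bound $3\cdot \frac{2+\alpha}{3}$. The only mild subtlety, which I would address in one line, is the case $p \leq 0$ where a bundle could have value $0$; in that situation $\mathrm{M}_p(\mathcal{A}^*(\mathcal{I},p)) = 0$ (by the standard convention for the geometric mean/egalitarian welfare) and the claim holds trivially, so we may restrict to $p$-optimal allocations with strictly positive bundle values, where the generalized mean inequality applies verbatim.

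No serious obstacle is expected: once the matching bound $|\mathcal{H}| \leq \alpha q$ is in hand, the argument is essentially a one-line arithmetic computation followed by the generalized mean inequality. The only care needed is handling the degenerate case of bundles of value zero for $p \leq 0$, which is dispatched by the above convention.
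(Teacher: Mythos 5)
Your proposal is correct and follows essentially the same route as the paper: bound the arithmetic mean using $|\mathcal{H}|\leq\alpha q$ (the paper writes $|\mathcal{H}|=\overline{\alpha}q$ with $\overline{\alpha}\leq\alpha$, which is the same computation since $3|\mathcal{H}|+2(q-|\mathcal{H}|)=2q+|\mathcal{H}|$ is increasing in $|\mathcal{H}|$), then apply the generalized mean inequality ${\rm M}_p\leq{\rm M}_1$. Your extra remark about zero-valued bundles for $p\leq 0$ is a reasonable bit of care that the paper omits.
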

\begin{proof}
Recall that $\mathcal{H}$ and $\overline{\mathcal{H}}$ form a partition of the $q$ bundles in $\mathcal{A}^*(\mathcal{I},p)$. Let $|\mathcal{H}|=\overline{\alpha} q$ for some $\overline{\alpha}\leq \alpha$. Then, $\sum \limits _{A^*_i \in \overline{\mathcal{H}}} v(A^*_i)\leq 2(1-\overline{\alpha})q$. Therefore, 
\begin{align*}
{\rm M}_p(\mathcal{A}^*(\mathcal{I},p)) &\leq {\rm M}_1(\mathcal{A}^*(\mathcal{I},p)) \tag{via the generalized mean inequality}\\
& \leq \frac{1}{q}\left( 3\overline{\alpha} q + 2(1-\overline{\alpha})q \right) \tag{averaging over the values of bundles in $\mathcal{H}$ and $\overline{\mathcal{H}}$}\\
&\leq 2+\alpha \tag{since $\overline{\alpha}\leq \alpha$}
\end{align*}
\end{proof}
Hence, we have a polynomial-time reduction from the Gap-3DM to the $p$-mean welfare maximization problem such that:\\
(i) When Gap-3DM instance $\mathcal{C}$ is a {\rm YES} instance, the optimal $p$-mean welfare is at least $3$.\\
(ii) When $\mathcal{C}$ is a {\rm NO} instance, the optimal $p$-mean welfare is at most $3\ c(\alpha)$, for a constant $c(\alpha) <1$.  

As mentioned previously, such a gap-preserving reduction establishes the {\rm APX}-hardness of $p$-mean welfare maximization. 

Finally, note that we can efficiently simulate the demand oracle for the valuation function $v$ in the constructed instance $\mathcal{I}$. For any additive function $f$, the response to a demand query---with prices $p_j$s associated with the goods---is simply the subset of goods $g$ that satisfy $f(g) - p_g \geq 0$. The XOS valuation function $v$ in the reduction is obtained by considering a maximum over $T$ additive functions. The parameter $T$ is the number of edges in the given Gap-3DM instance and, hence, is polynomially bounded. Therefore, we can efficiently simulate the demand oracle for $v$ by explicitly optimizing over all the $T$ additive functions.  Overall, we get that the {\rm APX}-hardness holds in the demand oracle model and the theorem follows. 

\end{document}